\providecommand{\U}[1]{\protect\rule{.1in}{.1in}}
\newtheorem{theorem}{Theorem}[section]
\newtheorem{corollary}[theorem]{Corollary}
\newtheorem{proposition}[theorem]{Proposition}
\newtheorem{remark}[theorem]{Remark}
\newtheorem{lemma}[theorem]{Lemma}
\numberwithin{equation}{section}
\begin{document}
\title[KdV equation]{Trace formulas revisited and a new representation of KdV solutions with
short-range initial data}
\author{Alexei Rybkin}
\address{Department of Mathematics and Statistics, University of Alaska Fairbanks, PO
Box 756660, Fairbanks, AK 99775}
\email{arybkin@alaska.edu}
\thanks{The author is supported in part by the NSF grant DMS-2009980.}
\date{August, 2024}
\subjclass{34L25, 37K15, 47B35}
\keywords{trace formula, KdV equation, Hankel operator.}

\begin{abstract}
We put forward a new approach to Deift-Trubowitz type trace formulas for the
1D Schrodinger operator with potentials that are summable with the first
moment (short-range potentials). We prove that these formulas are preserved
under the KdV flow whereas the class of short-range potentials is not.
Finally, we show that our formulas are well-suited to study the dispersive
smoothing effect.

\end{abstract}
\maketitle
\dedicatory{We dedicate this paper to Vladimir Marchenko on the occasion of his centennial
birthday. This paper is also dedicated to the memory of Vladimir Zakharov who
has recently left us.}

\section{Introduction}

We are concerned with the Cauchy problem for the Korteweg-de Vries (KdV)
equation%
\begin{equation}%
\begin{cases}
\partial_{t}q-6q\partial_{x}q+\partial_{x}^{3}q=0,\ \ \ x\in\mathbb{R}%
,t\geq0\\
q(x,0)=q(x).
\end{cases}
\label{KdV}%
\end{equation}
As is well-known, (\ref{KdV}) is the first nonlinear evolution PDE solved in
the seminal 1967 Gardner-Greene-Kruskal-Miura paper \cite{GGKM67} by the
method which is now referred to as the inverse scattering transform (IST).
Conceptually, the IST is similar to the Fourier method but is based on the
direct/inverse scattering (spectral) theory for the 1D Schr\"{o}dinger
operator $\mathbb{L}_{q}=-\partial_{x}^{2}+q(x)$. Explicit formulas, however,
are in short supply and trace formulas are among a few available.
Historically, for short-range potentials $q\left(  x\right)  $ (i.e. summable
with the first moment) such a formula (see (\ref{DT trace})) was put forward
by Deift-Trubowitz in \cite{Deift79} in the late 70s (we call it the
Deift-Trubowitz trace formula). However, no adaptation of the trace formula
(\ref{DT trace}) to the solution $q\left(  x,t\right)  $ to (\ref{KdV}) is
offered in \cite{Deift79} and, to the best of our knowledge, it has not been
done in the literature. The main goal of our contribution is to address this problem.

To this end, we first put forward an elementary approach to generate trace
formulas for the Schr\"{o}dinger operator $\mathbb{L}_{q}$ with a decaying
(but not necessarily short-range) potential $q$. More precisely, we start out
with considering potentials $q\in L^{1}\left(  \mathbb{R}\right)  $ such that
the right Jost solution $\psi(x,k)$ of $\mathbb{L}_{q}\psi=k^{2}\psi$
satisfies the condition: for all real $x$%
\[
2\mathrm{i}k\left(  e^{-\mathrm{i}kx}\psi(x,k)-1\right)  +\int_{x}^{\infty
}q\left(  s\right)  \mathrm{d}s\in H^{2},
\]
where $H^{2}$ is the usual Hardy space consisting of analytic functions on the
upper half plane with $L^{2}$ non-tangential boundary values on the real line.
We show that for any $\alpha>0$ and almost every $x$%
\begin{equation}
q\left(  x\right)  =\partial_{x}\int_{\mathbb{R}}\operatorname{Re}%
\frac{1-e^{-\mathrm{i}kx}\psi(x,k)}{k+\mathrm{i}\alpha}\frac{2k\mathrm{d}%
k}{\pi}\text{,}\label{main trace formula}%
\end{equation}
the integral being absolutely convergent. (see Theorem \ref{thm on trace} for
the complete statement.) 

While the proof, based on Hardy space arguments, is totally elementary, the
formula (\ref{main trace formula}) is surprisingly convenient. First of all,
if $q$ is short range then (\ref{main trace formula}) readily recovers the
Deift-Trubowitz trace formula (\ref{DT trace}) (see the Appendix). For this
reason we call (\ref{main trace formula}) a Deift-Trubowitz-type trace formula.

The main advantage of (\ref{main trace formula}) is that it is particularly
convenient in the KdV context. We first show that under additional assumptions
on $q$ it admits various derivations (\ref{trace formula2}),
(\ref{trace formula 3}), (\ref{trace formula 4}) that serve different
purposes. In particular, (\ref{trace formula 4}) remains valid for $q\left(
x,t\right)  $ (see section \ref{sect of kdv solutions}). The problem with the
original Deift-Trubowitz trace formula (\ref{DT trace}) is that, as we show
below (see Corollary \ref{Corol on preservation}), $q\left(  x,t\right)  $
need not remain short-range for $t>0$ and therefore the approach of
\cite{Deift79}, where (\ref{DT trace}) is derived, breaks down in a serious
way. We emphasize that we actually demonstrate as a corollary that
(\ref{DT trace}) does hold for $q\left(  x,t\right)  ,t\geq0$. This also
appears to be a new result.

It should be noticed that (\ref{trace formula 3}) is well-suited for subtle
analysis of the gain of regularity (aka dispersive smoothing) phenomenon for
the KdV equation (section \ref{sect on analytic smoothing}). We study this
phenomenon in \cite{GruRybSIMA15}, \cite{GruRyb2018}, \cite{GruRybBLMS20}
where we rely on the Dyson formula (aka the second log determinant formula)
and the theory of Hankel operators for extension of the IST to initial data
$q\left(  x\right)  $ that is essentially arbitrary at $-\infty$ (but still
short-range at $+\infty$). Comparing with the Dyson formula considerations,
our trace approach (which also crucially uses Hankel operators) is more robust
for analysis of KdV solutions (see Remark \ref{remark in last section}). To
the best of our knowledge Theorem \ref{Thm on decay} is new.

Note that for periodic potentials the trace formula was studied in great
detail the 70s by McKean-Moerbeke \cite{McKean75}, Trubowitz
\cite{Trubowitz77} and many others (see e.g. \cite{GesztesyBOOK08} for a nice
historic review) before (\ref{DT trace}). It was generalized by Craig in
\cite{Craig89} in the late 80s to arbitrary bounded continuous potentials (the
so-called Craig's trace formula). In the 90s Gesztesy et al
\cite{Gesztesyetal93} - \cite{GesztesyXi96} developed a general approach to
Craig type trace formulas based on the Krein trace formula (the "true" trace
formula) under the only condition of essential boundedness from below. The
general trace formulas studied in \cite{Gesztesyetal93} - \cite{GesztesyXi96}
yield previously known ones. In the 2000s we \cite{RybSIAM2001} introduced a
new way of generating trace-type formulas that is not based upon Krein's trace
formula but rests on the Titchmarsh-Weyl theory for second order differential
equations and asymptotics of the Titchmarsh-Weyl m-function. The approach is
quite elementary and essentially free of any conditions. Recently, in Binder
et al \cite{BinderetalDuke18} Craig's trace formulas was used in the KdV
context to address some open problems related to almost periodic initial data.

The paper is organized as follows. In section \ref{notation} we introduce our
notations Section \ref{HO} is devoted to basics of Hardy spaces and Hankel
operator our approach is based upon. In section \ref{Sect Overview} we review
the classical direct/inverse scattering theory for Schrodinger operators on
the line using the language of Hankel operators. Section
\ref{sect on trace formulas} is where our trace formulas are introduced. We do
not claim their originality but believe that the approach is new. In Section
\ref{sect of kdv solutions} we derive a representation for the solution to the
KdV equation with short-range initial data. To the best of our knowledge it is
new. In the final section \ref{sect on analytic smoothing} we demonstrate how
our trace formula for the KdV\ is well-suited for the analysis of dispersive
smoothing. The approach builds upon our recent \cite{GruRybBLMS20} and
suggests an effective way to understanding how the KdV flow trades the decay
of initial data for gain of regularity. In Appendix we demonstrate that the
Deift-Trubowitz trace formula is actually a "nonlinearization" of ours.

\section{Notations\label{notation}}

Our notations are quite standard:

\begin{itemize}
\item Unless otherwise stated, all integrals are Lebesgue and, as is commonly
done, we drop limits of integration if the integral (absolutely convergent) is
over the whole line. For convergent integrals that are not absolutely
convergent we always use the Cauchy principal value%
\[
\left(  PV\right)  \int=\lim_{a\rightarrow\infty}\int_{-a}^{a}.
\]

\item $\chi_{S}$ is the characteristic function of a (measurable) set $S$.

\item As usual, $L^{p}\left(  S\right)  ,\ 0<p\leq\infty$, is the Lebesgue
space on a (measurable) set $S$. If $S=\mathbb{R}$ then we abbreviate
$L^{p}\left(  \mathbb{R}\right)  =L^{p}$. We include $L^{p}$ in the family of
weighted $L^{p}$ spaces defined by%
\[
L_{\alpha}^{p}=\left\{  f\ |\ \int\left\vert f\left(  x\right)  \right\vert
^{p}\left\langle x\right\rangle ^{\alpha}\mathrm{d}x<\infty\right\}
,\ \ \alpha>0.
\]
where $\left\langle x\right\rangle =\sqrt{1+x^{2}}$. The class $L_{1}^{1}$ is
basic to scattering theory for 1D Schr\"{o}dinger operators (short-range potentials).

\item $\left\Vert \cdot\right\Vert _{X}$ stands for a norm in a Banach space
$X$. The most common space is $X=L^{2}$. We merely write $\left\Vert
\cdot\right\Vert $ in this case and also%
\[
\left\Vert f\right\Vert ^{2}=\left\langle f,f\right\rangle \text{ where
}\left\langle f,g\right\rangle =\int f\left(  x\right)  \overline{g}\left(
x\right)  \mathrm{d}x.
\]

\item We write $x\simeq y$ if $x=Cy$ for some universal constant $C$;
$x\lesssim_{a}y$ if $x,y\geq0$ and $x\leq C\left(  a\right)  y$ with a
positive $C$ dependent on $a$. We drop $a$ if $C$ is a universal constant.

\item We do not distinguish between classical and distributional derivatives.

\item A statement $A_{\pm}$ means two separate statements: $A_{-}$ and $A_{+}$.
\end{itemize}

\section{Hardy spaces and Hankel operators\label{HO}}

To fix our notation we review some basics of Hardy spaces and Hankel operators
following \cite{Nik2002}.

A function $f$ analytic in $\mathbb{C}^{\pm}=\left\{  z\in\mathbb{C}%
:\pm\operatorname{Im}z>0\right\}  $ is in the Hardy space $H_{\pm}^{p}$ for
some $0<p\leq\infty$ if
\[
\Vert f\Vert_{H_{\pm}^{p}}^{p}\overset{\operatorname*{def}}{=}\sup_{y>0}\Vert
f(\cdot\pm iy)\Vert_{p}<\infty.
\]
We set $H^{p}=H_{+}^{p}.$ It is a fundamental fact of the theory of Hardy
spaces that any $f\left(  z\right)  \in H_{\pm}^{p}$ with $0<p\leq\infty$ has
non-tangential boundary values $f\left(  x\pm\mathrm{i}0\right)  $ for almost
every (a.e.) $x\in\mathbb{R}$ and%
\begin{equation}
\Vert f\Vert_{H_{\pm}^{p}}=\Vert f\left(  \cdot\pm\mathrm{i}0\right)
\Vert_{L^{p}}=\left\Vert f\right\Vert _{L^{p}}. \label{H^p norm}%
\end{equation}
Classes $H_{\pm}^{\infty}$ and $H_{\pm}^{2}$ will be particularly important.
$H_{\pm}^{\infty}$ is the algebra of uniformly bounded in $\mathbb{C}^{\pm}$
functions and $H_{\pm}^{2}$ is the Hilbert space with the inner product
induced from $L^{2}$.

It is well-known that $L^{2}=H_{+}^{2}\oplus H_{-}^{2},$ the orthogonal
(Riesz) projection $\mathbb{P}_{\pm}$ onto $H_{\pm}^{2}$ being given by%
\begin{equation}
(\mathbb{P}_{\pm}f)(x)=\pm\frac{1}{2\pi\mathrm{i}}\lim_{\varepsilon
\rightarrow0+}\int\frac{f(s)\mathrm{d}s}{s-(x\pm\mathrm{i}\varepsilon)}%
=:\pm\frac{1}{2\pi\mathrm{i}}\int\frac{f(s)\ \mathrm{d}s}{s-(x\pm\mathrm{i}%
0)}. \label{proj}%
\end{equation}
Observe that the Riesz projections can also be rewritten in the form%
\begin{equation}
\left(  \mathbb{P}_{\pm}f\right)  \left(  x\right)  =\left(
\widetilde{\mathbb{P}}_{\pm}f\right)  \left(  x\right)  \mp\frac{1}%
{2\pi\mathrm{i}}\int\frac{f(s)}{s+\mathrm{i}}\mathrm{d}s, \label{regul proj}%
\end{equation}
where%
\[
\left(  \widetilde{\mathbb{P}}_{\pm}f\right)  \left(  x\right)
:=(x+\mathrm{i})\left(  \mathbb{P}_{\pm}\frac{f}{\cdot+\mathrm{i}}\right)
(x)
\]
is well-defined for any $f\in L^{\infty}$. This representation is very
important in what follows.

If $f\in L^{2}$ then $\mathbb{P}_{-}f$ is by definition in $H_{-}^{2}$ but of
course not in $L^{1}$. However under a stronger decay condition we have the
following statement.

\begin{lemma}
\label{lemma on PV}If $\left\langle x\right\rangle f\left(  x\right)  \in
L^{2}$ then%
\begin{equation}
\left(  PV\right)  \int\mathbb{P}_{-}f=\frac{1}{2}\int f. \label{PV}%
\end{equation}

\end{lemma}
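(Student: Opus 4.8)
The plan is to transfer everything to the Fourier side, where the Riesz projection $\mathbb{P}_{-}$ becomes multiplication by the indicator of a half-line and the quantity $(PV)\int\mathbb{P}_{-}f$ turns into a \emph{symmetric} boundary value of $\widehat{\mathbb{P}_{-}f}$ at the origin. Fix the convention $\widehat{g}(\xi)=\int g(x)e^{-\mathrm{i}\xi x}\,\mathrm{d}x$, so that $\int g=\widehat{g}(0)$ whenever $g\in L^{1}$. First I would record what the hypothesis buys. Since $1/\left\langle x\right\rangle \in L^{2}$, Cauchy--Schwarz gives $f\in L^{1}$, so $\int f=\widehat{f}(0)$ is an honest absolutely convergent integral and $\widehat{f}$ is continuous; moreover $\left\langle x\right\rangle f\in L^{2}$ is \emph{equivalent} to $\widehat{f}\in H^{1}(\mathbb{R})$ (multiplication by $x$ corresponds to differentiation in $\xi$), and the one-dimensional Sobolev embedding $H^{1}\hookrightarrow C^{0,1/2}$ makes $\widehat{f}$ H\"older continuous. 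This H\"older regularity at $\xi=0$ is precisely the input the limiting argument below needs.

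Next I would identify $\mathbb{P}_{-}$ spectrally. Because $H_{-}^{2}$ consists of boundary values of functions analytic and $L^{2}$-bounded in $\mathbb{C}^{-}$, and $e^{\mathrm{i}\xi z}$ stays bounded in $\mathbb{C}^{-}$ exactly for $\xi\le0$, the space $H_{-}^{2}$ corresponds to spectra supported on $(-\infty,0]$; hence $\widehat{\mathbb{P}_{-}f}=\chi_{(-\infty,0)}\widehat{f}$. (Equivalently, applying Sokhotski--Plemelj to (\ref{proj}) yields $\mathbb{P}_{-}f=\tfrac12 f-\tfrac{\mathrm{i}}2 Hf$ with $H$ the Hilbert transform, whose symbol is $-\mathrm{i}\operatorname{sgn}\xi$, giving the same $\tfrac12(1-\operatorname{sgn}\xi)\widehat{f}$; this also reduces the claim to $(PV)\int Hf=0$.) I would then write the truncated integral via Parseval: $\int_{-a}^{a}\mathbb{P}_{-}f=\langle\mathbb{P}_{-}f,\chi_{[-a,a]}\rangle$, and since the Fourier transform of $\chi_{[-a,a]}$ is the Dirichlet kernel $2\sin(a\xi)/\xi$, this equals $\frac1\pi\int\widehat{\mathbb{P}_{-}f}(\xi)\,\frac{\sin(a\xi)}{\xi}\,\mathrm{d}\xi$.

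Finally I would let $a\to\infty$. The kernel $\frac1\pi\frac{\sin(a\xi)}{\xi}$ is the standard approximate identity of total mass $1$ concentrating at the origin, so $\lim_{a\to\infty}\frac1\pi\int\widehat{\mathbb{P}_{-}f}(\xi)\frac{\sin(a\xi)}{\xi}\,\mathrm{d}\xi=\tfrac12\bigl(\widehat{\mathbb{P}_{-}f}(0^{-})+\widehat{\mathbb{P}_{-}f}(0^{+})\bigr)$. As $\widehat{\mathbb{P}_{-}f}=\chi_{(-\infty,0)}\widehat{f}$ vanishes for $\xi>0$ and equals the H\"older-continuous $\widehat{f}$ for $\xi<0$, the two one-sided limits are $0$ and $\widehat{f}(0)$, so the symmetric limit is $\tfrac12\widehat{f}(0)=\tfrac12\int f$, which is (\ref{PV}). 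The one genuinely delicate point --- and the place where the hypothesis is indispensable --- is the justification of this last limit across the jump of $\widehat{\mathbb{P}_{-}f}$ at the origin: pointwise Dirichlet convergence to the average of the one-sided limits requires a Dini/H\"older condition there, supplied exactly by $\widehat{f}\in C^{0,1/2}$, i.e. by $\left\langle x\right\rangle f\in L^{2}$. The factor $\tfrac12$ in (\ref{PV}) is nothing but the Dirichlet average across this jump, and this is the step I would write out most carefully.
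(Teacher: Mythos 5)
Your argument is correct, but it takes a genuinely different route from the paper. You pass to the Fourier side, identify $\widehat{\mathbb{P}_{-}f}=\chi_{(-\infty,0)}\widehat{f}$ via Paley--Wiener, recognize the truncated integral as a pairing with the Dirichlet kernel, and read off the factor $\tfrac12$ as the Dirichlet average across the jump of $\widehat{\mathbb{P}_{-}f}$ at $\xi=0$; the hypothesis $\left\langle x\right\rangle f\in L^{2}$ enters as $\widehat{f}\in H^{1}(\mathbb{R})\hookrightarrow C^{0,1/2}$, which supplies the Dini condition needed for pointwise Dirichlet convergence. The paper instead stays entirely on the physical side and inside its Hardy-space toolkit: it moves $\mathbb{P}_{-}$ onto $\chi_{[-a,a]}$ by self-adjointness, uses the regularized projection (\ref{regul proj}) to extract the explicit constant $\tfrac12-\tfrac1\pi\arctan\tfrac1a\to\tfrac12$ times $\int f$, and kills the remainder by pairing $(\cdot-\mathrm{i})f\in L^{2}$ (this is where the weight is used) against $\mathbb{P}_{-}\bigl(\chi_{|\cdot|>a}/(\cdot+\mathrm{i})\bigr)$, whose $L^{2}$ norm tends to zero. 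Both proofs consume the hypothesis in the same way ($L^{2}$ control of $xf$), and yours is arguably more transparent about where the $\tfrac12$ comes from; the paper's has the advantage of requiring no Fourier conventions or classical pointwise-convergence theorems and of reusing identity (\ref{regul proj}), which the paper needs anyway. Two small points to tighten if you write yours out: calling $\tfrac1\pi\sin(a\xi)/\xi$ an ``approximate identity of total mass $1$'' is loose (it is not $L^{1}$-normalizable and not positive), and since $\widehat{f}$ need not be in $L^{1}$ you should split the frequency integral into $|\xi|\le1$, handled by the H\"older/Dini estimate $\bigl|\widehat{f}(\xi)-\widehat{f}(0)\bigr|/|\xi|\lesssim|\xi|^{-1/2}$ plus Riemann--Lebesgue, and $|\xi|>1$, where $\widehat{f}(\xi)/\xi\in L^{1}$ because $\widehat{f}\in L^{2}$, so that term also vanishes by Riemann--Lebesgue.
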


\begin{proof}
Note first that if $\left\langle x\right\rangle f\left(  x\right)  \in L^{2}$
then $f$ is of course integrable as one sees from%
\[
\int\left\vert f\right\vert =\int\left\vert \left\langle x\right\rangle
f\left(  x\right)  \right\vert \frac{\mathrm{d}x}{\left\langle x\right\rangle
}\leq\left\Vert \left\langle \cdot\right\rangle f\right\Vert \left\Vert
\left\langle \cdot\right\rangle ^{-1}\right\Vert <\infty.
\]
It follows then that for a finite $a>0$ we have%
\begin{align}
\int_{-a}^{a}\mathbb{P}_{-}f  &  =\left\langle \mathbb{P}_{-}f,\chi
_{\left\vert \cdot\right\vert \leq a}\right\rangle =\left\langle
f,\mathbb{P}_{-}\chi_{\left\vert \cdot\right\vert \leq a}\right\rangle \text{
\ (by (\ref{regul proj}))}\nonumber\\
&  =\left\langle f,(\cdot+\mathrm{i})\mathbb{P}_{-}\frac{\chi_{\left\vert
\cdot\right\vert \leq a}}{\cdot+\mathrm{i}}\right\rangle +\frac{1}%
{2\pi\mathrm{i}}\int\frac{\chi_{\left\vert \cdot\right\vert \leq a}%
}{s-\mathrm{i}}\mathrm{d}s\ \int f\label{Int from -a to a}\\
&  =\left\langle (\cdot-\mathrm{i})f,\mathbb{P}_{-}\frac{\chi_{\left\vert
\cdot\right\vert \leq a}}{\cdot+\mathrm{i}}\right\rangle +\left(  \frac{1}%
{2}-\frac{1}{\pi}\arctan\frac{1}{a}\right)  \int f.\nonumber
\end{align}
Here we have used%
\begin{align}
\frac{1}{2\pi\mathrm{i}}\int\frac{\chi_{\left\vert \cdot\right\vert \leq a}%
}{s-\mathrm{i}}\mathrm{d}s  &  =\frac{1}{2\pi\mathrm{i}}\int_{-a}^{a}%
\frac{\mathrm{d}s}{s-\mathrm{i}}=\frac{1}{2}-\frac{1}{\pi}\arctan\frac{1}%
{a}\label{integral}\\
&  \rightarrow\frac{1}{2},\ \ \ a\rightarrow+\infty.\nonumber
\end{align}
Since $\chi_{\left\vert \cdot\right\vert \leq a}=1-\chi_{\left\vert
\cdot\right\vert >a}$ and $1/\left(  x+\mathrm{i}\right)  \in H^{2}$ we have%
\[
\mathbb{P}_{-}\frac{\chi_{\left\vert \cdot\right\vert \leq a}}{\cdot
+\mathrm{i}}=\mathbb{P}_{-}\frac{1}{\cdot+\mathrm{i}}-\mathbb{P}_{-}\frac
{\chi_{\left\vert \cdot\right\vert >a}}{\cdot+\mathrm{i}}=-\mathbb{P}_{-}%
\frac{\chi_{\left\vert \cdot\right\vert >a}}{\cdot+\mathrm{i}}%
\]
and therefore%
\[
\left\langle (\cdot-\mathrm{i})f,\mathbb{P}_{-}\frac{\chi_{\left\vert
\cdot\right\vert \leq a}}{\cdot+\mathrm{i}}\right\rangle =-\left\langle
(\cdot-\mathrm{i})f,\mathbb{P}_{-}\frac{\chi_{\left\vert \cdot\right\vert >a}%
}{\cdot+\mathrm{i}}\right\rangle .
\]
It follows that%
\begin{align*}
\left\vert \left\langle (\cdot-\mathrm{i})f,\mathbb{P}_{-}\frac{\chi
_{\left\vert \cdot\right\vert \leq a}}{\cdot+\mathrm{i}}\right\rangle
\right\vert  &  \leq\left\Vert (\cdot-\mathrm{i})f\right\Vert \left\Vert
\mathbb{P}_{-}\frac{\chi_{\left\vert \cdot\right\vert >a}}{\cdot+\mathrm{i}%
}\right\Vert \\
&  \leq\left\Vert \left\langle \cdot\right\rangle f\right\Vert \left\Vert
\frac{\chi_{\left\vert \cdot\right\vert >a}}{\cdot+\mathrm{i}}\right\Vert
\rightarrow0,\ a\rightarrow\infty.
\end{align*}
This means that%
\[
\lim_{a\rightarrow\infty}\left\langle (\cdot-\mathrm{i})f,\mathbb{P}_{-}%
\frac{\chi_{\left\vert \cdot\right\vert \leq a}}{\cdot+\mathrm{i}%
}\right\rangle =0
\]
and we can pass in (\ref{Int from -a to a}) as $a\rightarrow\infty$:%
\[
\lim_{a\rightarrow\infty}\int_{-a}^{a}\mathbb{P}_{-}f=\frac{1}{2}\int f.
\]

\end{proof}

We now define the Hankel operator on $H^{2}$. Let $(\mathbb{J}f)(x)=f(-x)$ be
the operator of reflection. Given $\varphi\in L^{\infty}$ the operator
$\mathbb{H}(\varphi):H^{2}\rightarrow H^{2}$ given by the formula%
\begin{equation}
\mathbb{H}(\varphi)f=\mathbb{JP}_{-}\varphi f,\ \ \ f\in H_{+}^{2},
\label{Hankel}%
\end{equation}
is called the Hankel\emph{ }operator with symbol $\varphi$. Clearly
$\left\Vert \mathbb{H}(\varphi)\right\Vert \leq\left\Vert \varphi\right\Vert
_{L^{\infty}}$, $\mathbb{H}(\varphi)$ is self-adjoint if $(\mathbb{J}%
\varphi)(x)=\overline{\varphi\left(  x\right)  }$ (this is always our case),
$\mathbb{H}(\varphi)=0$ if $\varphi$ is a constant, and%
\begin{align}
\mathbb{H}(\varphi)  &  =\mathbb{H}(\widetilde{\mathbb{P}}_{-}\varphi
)\label{regularized Hankel}\\
&  =\mathbb{H}(\mathbb{P}_{-}\varphi)\text{ (if }\varphi\in L^{2}\cap
L^{\infty}\text{).}\nonumber
\end{align}
The relevance of the Hankel operator in our setting is on the surface as the
Marchenko operator, the cornerstone of the IST, is a Hankel operator. However,
while in the literature on integrable systems it is rarely used in the form
(\ref{Hankel}), we find it particularly convenient due, among others, to the
property (\ref{regularized Hankel}), which is less transparent in the integral representation.

Finally we note that reliance on the theory of Hankel operator in the study of
completely integrable systems has recently picked up momentum (see e.g.
\cite{Blower23}, \cite{Doikouetal21}, \cite{GerardPushnitski2023},
\cite{Gruetal23}, \cite{Malham22} and the references cited therein).

\section{Overview of short-range scattering\label{Sect Overview}}

Unless otherwise stated all facts are taken from \cite{MarchBook2011}. Through
this section we assume that $q$ is short-range, i.e. $q\in L_{1}^{1}$.
Associate with $q$ the full line Schr\"{o}dinger operator $\mathbb{L}%
_{q}=-\partial_{x}^{2}+q(x)$. As is well-known, $\mathbb{L}_{q}$ is
self-adjoint on $L^{2}$ and its spectrum consists of $J$ simple negative
eigenvalues $\{-\kappa_{j}^{2}:1\leq j\leq J\},$ called bound states ($J=0$ if
there are no bound states), and two fold absolutely continuous component
filling $\left(  0,\infty\right)  $. There is no singular continuous spectrum.
Two linearly independent (generalized) eigenfunctions of the a.c. spectrum
$\psi_{\pm}(x,k),\;k\in\mathbb{R}$, can be chosen to satisfy
\begin{equation}
\psi_{\pm}(x,k)=e^{\pm\mathrm{i}kx}+o(1),\;\partial_{x}\psi_{\pm}%
(x,k)\mp\mathrm{i}k\psi_{\pm}(x,k)=o(1),\ \ x\rightarrow\pm\infty.
\label{eq6.2}%
\end{equation}
The function $\psi_{\pm}$, referred to as right/left Jost solution of the
Schr\"{o}dinger equation%
\begin{equation}
\mathbb{L}_{q}\psi=k^{2}\psi, \label{eq6.3}%
\end{equation}
is analytic for $\operatorname{Im}k>0$. It is convenient to introduce%
\[
y_{\pm}\left(  k,x\right)  :=e^{\mp\mathrm{i}kx}\psi_{\pm}\left(  x,k\right)
-1,
\]
($1+y_{\pm}\left(  k,x\right)  $ is sometimes referred to as the Faddeev
function), which is $H^{2}$ for each $x$. Since $q$ is real, $\overline{\psi
}_{\pm}$ also solves (\ref{eq6.3}) and one can easily see that the pairs
$\{\psi_{+},\overline{\psi}_{+}\}$ and $\{\psi_{-},\overline{\psi}_{-}\}$ form
fundamental sets for (\ref{eq6.3}). Hence $\psi_{\mp}$ is a linear combination
of $\{\psi_{\pm},\overline{\psi}_{\pm}\}$. We write this fact as follows%
\begin{equation}
T(k)\psi_{\mp}(x,k)=\overline{\psi_{\pm}(x,k)}+R_{\pm}(k)\psi_{\pm
}(x,k),\ \ \ k\in\mathbb{R}\text{,} \label{R basic scatt identity}%
\end{equation}
where $T$ and $R_{\pm},$ are called transmission, right/left reflection
coefficients respectively. The function $T\left(  k\right)  $ is meromorphic
for $\operatorname{Im}k>0$ with simple poles at $\left(  \mathrm{i}\kappa
_{j}\right)  $ and continuous for $\operatorname{Im}k=0$. Generically,
$T\left(  0\right)  =0$. The reflection coefficient $R_{\pm}\left(  k\right)
\in L^{2}$ but need not admit be analytic.

In the context of the IST Zakharov-Faddeev trace formulas \cite{Zakharov71}
(conservation laws) play very important role. For Schwarz potentials $q$ they
are infinitely many. Explicitly,%
\begin{equation}
\frac{8}{\pi}\int\log\left(  1-\left\vert R\left(  k\right)  \right\vert
^{2}\right)  ^{-1}\mathrm{d}k=\int q+%
{\displaystyle\sum}
\kappa_{n}\text{ \ \ (first trace formula)} \label{ZF trace 1}%
\end{equation}%
\begin{equation}
\frac{8}{\pi}\int k^{2}\log\left(  1-\left\vert R_{\pm}\left(  k\right)
\right\vert ^{2}\right)  ^{-1}\mathrm{d}k=\int q^{2}-\frac{16}{3}%
{\displaystyle\sum}
\kappa_{n}^{3}\text{ \ \ (second trace formula)} \label{ZF trace 2}%
\end{equation}
It is shown in the recent \cite{Hryniv21} that (\ref{ZF trace 1}) holds for
any $q\in L^{1}$, each term being finite. Since $\left\vert R_{\pm}\left(
k\right)  \right\vert \leq1$ and%
\[
\log\left(  1-\left\vert R_{\pm}\left(  k\right)  \right\vert ^{2}\right)
^{-1}\geq\left\vert R_{\pm}\left(  k\right)  \right\vert ^{2},
\]
one concludes that $R_{\pm}\left(  k\right)  \in L^{2}$ for $q\in L^{1}$. The
second one (\ref{ZF trace 2}) holds for $q\in L^{1}\cap L^{2}$
\cite{KillipSimon2008} and readily implies that $kR_{\pm}\left(  k\right)  \in
L^{2}$. Note, that Zakharov-Faddeev trace formulas are not directly related to
the trace formulas we discuss in Introduction but they are also related to the
trace of some operators.

The identities (\ref{R basic scatt identity}) are totally elementary but serve
as a basis for inverse scattering theory and for this reason they are commonly
referred to as basic scattering relations. As is well-known (see, e.g.
\cite{MarchBook2011}), the triple $\{R_{\pm},(\kappa_{j},c_{\pm,j})\}$, where
$c_{\pm,j}=\left\Vert \psi_{\pm}(\cdot,\mathrm{i}\kappa_{j})\right\Vert ^{-1}%
$, determines $q$ uniquely and is called the scattering data for
$\mathbb{L}_{q}$. We emphasize that in order to come from a $L_{1}^{1}$
potential the scattering data $\{R_{\pm},(\kappa_{n},c_{\pm,n})\}$ must
satisfy some conditions known as Marchenko's characterization
\cite{MarchBook2011}. The actual process of solving the inverse scattering
problem necessary for the IST is historically based on the Marchenko theory
(also knows as Faddeev-Marchenko or Gelfand-Levitan-Marchenko). In fact, this
procedure is quite transparent from the Hankel operator point of view. Indeed,
replacing $\psi_{\pm}$ in (\ref{R basic scatt identity}) with $y_{\pm}$ and
applying the operator $\mathbb{JP}_{-}$, a straightforward computation
\cite{GruRybSIMA15} leads to%
\begin{equation}
y_{\pm}+\mathbb{H}(\varphi_{\pm})y_{\pm}=-\mathbb{H}(\varphi_{\pm})1,\text{
(Marchenko's equation)} \label{Marchenko eq}%
\end{equation}
where $\mathbb{H}(\varphi_{\pm})$ is the Hankel operator (\ref{Hankel}) with
symbol%
\begin{equation}
\varphi_{\pm}\left(  k,x\right)  =%
{\displaystyle\sum_{n=1}^{N}}
\frac{-\mathrm{i}c_{\pm,n}^{2}e^{\mp2\kappa_{n}x}}{k-\mathrm{i}\kappa_{n}%
}+R_{\pm}\left(  k\right)  e^{\pm2\mathrm{i}kx}, \label{Symbol}%
\end{equation}
and $\mathbb{H}(\varphi_{\pm})1$ is understood as%
\[
\mathbb{H}(\varphi_{\pm})1=\mathbb{JP}_{-}\varphi_{\pm}=\mathbb{P}%
_{+}\mathbb{J}\varphi_{\pm}=\mathbb{P}_{+}\overline{\varphi}_{\pm}.
\]
We call (\ref{Marchenko eq}) the Marchenko equation as its Fourier image is
the Marchenko integral equation. It is proven in \cite[Theorem 8.2]%
{GruRybSIMA15} that $I+\mathbb{H}(\varphi_{\pm})$ is positive definite and
therefore%
\begin{equation}
y_{\pm}=-\left[  I+\mathbb{H}(\varphi_{\pm})\right]  ^{-1}\mathbb{H}%
(\varphi_{\pm})1\in H^{2}. \label{sltn to March eq}%
\end{equation}
Thus, given data $\{R_{\pm},(\kappa_{j},c_{\pm,j})\}$ we compute $\varphi
_{\pm}$ by (\ref{Symbol}) and form the Hankel operator\ $\mathbb{H}%
(\varphi_{\pm})$. The function $y_{\pm}\left(  k,x\right)  $ is found by
(\ref{sltn to March eq}). The potential $q\left(  x\right)  $ can then be
recovered in a few ways. Our method is, of course, to apply a suitable trace
formula, which we derive in the next section.

Since many of our proofs below are based on limiting arguments we need to
understand in what sense scattering data converges as we approximate $q$ in
the $L_{1}^{1}$. In particular the following statement plays an important role.

\begin{proposition}
\label{Prop on R}If $q_{n}\left(  x\right)  $ converges in $L_{1}^{1}$ to
$q\left(  x\right)  $ then the sequence of reflection coefficients $R_{\pm
,n}\left(  k\right)  $ corresponding to $q_{n}\left(  x\right)  $ converges in
$L^{2}$ to $R_{\pm}\left(  k\right)  .$
\end{proposition}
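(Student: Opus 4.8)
The plan is to reduce the $L^{2}$ convergence of the reflection coefficients to two ingredients — pointwise (a.e.) convergence $R_{\pm,n}(k)\to R_{\pm}(k)$ and a single dominating function in $L^{2}$ that is uniform in $n$ — and then invoke Lebesgue's dominated convergence theorem. The starting point is the representation of the scattering coefficients through the Faddeev functions $m_{\pm}(x,k)=e^{\mp\mathrm{i}kx}\psi_{\pm}(x,k)=1+y_{\pm}(k,x)$. On the real axis unitarity of the scattering matrix gives $|T(k)|\le1$ and $|R_{\pm}(k)|\le1$, and from the Wronskian identity $T=-2\mathrm{i}k/W(\psi_{+},\psi_{-})$ together with the basic scattering relation (\ref{R basic scatt identity}) one obtains, after inserting the Faddeev functions, a representation of the form $R_{\pm}(k)=T(k)\,b_{\pm}(k)$ with $b_{\pm}(k)=\frac{1}{2\mathrm{i}k}\int e^{\mp2\mathrm{i}kx}q(x)m_{\mp}(x,k)\,\mathrm{d}x$. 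Since $|T|\le1$ this yields the pointwise bound $|R_{\pm}(k)|\le|b_{\pm}(k)|$, which is what all estimates will be applied to.

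First I would record the estimates on $m_{\pm}$ coming from its Volterra equation $m_{+}(x,k)=1+\int_{x}^{\infty}D_{k}(t-x)q(t)m_{+}(t,k)\,\mathrm{d}t$, where $D_{k}(s)=(e^{2\mathrm{i}ks}-1)/(2\mathrm{i}k)$ satisfies $|D_{k}(s)|\le\min(|s|,1/|k|)$. In the large-$k$ regime the bound $|D_{k}|\le1/|k|$ and a Neumann/Gronwall argument give, uniformly in $x$, $|m_{\pm}(x,k)-1|\le C\|q\|_{L^{1}}/|k|$ for $|k|\ge1$, the only feature of $q$ entering being $\|q\|_{L^{1}}$. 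Because $q_{n}\to q$ in $L_{1}^{1}$ forces $M:=\sup_{n}\|q_{n}\|_{L_{1}^{1}}<\infty$, feeding this into $b_{\pm}$ and using $|\widehat{q}(\xi)|\le\|q\|_{L^{1}}$ yields the uniform estimate $|R_{\pm,n}(k)|\le C(M)/|k|$ for $|k|\ge1$. Together with the unitarity bound $|R_{\pm,n}|\le1$ this furnishes the dominating function $g(k)=\min(1,C(M)/|k|)\in L^{2}$.

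Next I would prove pointwise convergence for every $k\ne0$. Here the first moment is essential: the bound $|D_{k}(s)|\le|s|$ together with $q\in L_{1}^{1}$ makes the Volterra equation solvable with $m_{\pm}(\cdot,k)$ bounded (by a constant depending on $k$ and on $\|q\|_{L_{1}^{1}}$ only), and comparing the integral equations for $q_{n}$ and $q$ — whose difference is controlled by $\|q_{n}-q\|_{L_{1}^{1}}$ through the resolvent bounds — gives $m_{\pm,n}(\cdot,k)\to m_{\pm}(\cdot,k)$ uniformly in $x$ for each fixed $k\ne0$. Since $q_{n}\to q$ in $L^{1}$ and the $m_{\mp,n}(\cdot,k)$ are uniformly bounded, $q_{n}m_{\mp,n}(\cdot,k)\to q\,m_{\mp}(\cdot,k)$ in $L^{1}$, so the oscillatory integrals $b_{\pm,n}(k)$ and, in the same way, $T_{n}(k)$ converge to their limits; hence $R_{\pm,n}(k)=T_{n}(k)b_{\pm,n}(k)\to R_{\pm}(k)$ for each $k\ne0$, that is, a.e.

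With a.e. convergence and the $L^{2}$-dominating function $g$ in hand, dominated convergence gives $\|R_{\pm,n}-R_{\pm}\|_{L^{2}}\to0$. I expect the main obstacle to be the uniform tail bound of the second paragraph: one must produce a bound on $R_{\pm,n}(k)$ that decays fast enough in $k$ to be square-integrable while depending on $q_{n}$ only through the single number $\|q_{n}\|_{L_{1}^{1}}$. The mechanism is that the oscillatory kernel $D_{k}$ contributes the decisive $1/|k|$ factor uniformly in $x$ and $n$, whereas the first moment is what keeps $m_{\pm}$ and the pointwise limits under control on the complementary finite band $|k|\le1$; fitting these two regimes into one $g\in L^{2}$ is the technical heart of the argument.
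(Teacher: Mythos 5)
Your proposal is correct, and it rests on the same two pillars as the paper's proof --- the representation (\ref{rep for R}), $R=\frac{T}{2\mathrm{i}k}\int e^{-2\mathrm{i}kx}q\,(1+y_{-})\,\mathrm{d}x$, and the Volterra/a priori estimates (\ref{est for y})--(\ref{est for T-Tn}) for the Faddeev functions and the transmission coefficient --- but it organizes them differently. The paper never proves pointwise convergence of $R_{\pm,n}$: it splits $\|R-R_{n}\|^{2}$ into $|k|\leq\varepsilon$ (handled by the unitarity bound $|R|,|R_{n}|\leq1$, contributing $8\varepsilon$) and $|k|>\varepsilon$, where it estimates the \emph{difference} directly, writing $R-R_{n}=I_{1}+I_{2}+I_{3}$ and bounding each piece by $C(q,\varepsilon)\,\|q-q_{n}\|_{L_{1}^{1}}$; letting $n\to\infty$ and then $\varepsilon\to0$ finishes the argument. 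You instead prove a.e.\ convergence $R_{\pm,n}(k)\to R_{\pm}(k)$ for each $k\neq0$ and construct a single $n$-independent majorant $g(k)=\min\bigl(1,C(M)/|k|\bigr)\in L^{2}$, then apply dominated convergence (in the standard generalized form: $|R_{\pm,n}-R_{\pm}|^{2}\leq4g^{2}$). Your majorant is legitimate: for $|k|\geq1$ the bound $|m_{\pm}(x,k)-1|\leq\frac{\|q\|_{L^{1}}}{|k|}e^{\|q\|_{L^{1}}}$ depends on $q_{n}$ only through $\|q_{n}\|_{L^{1}}\leq\sup_{n}\|q_{n}\|_{L_{1}^{1}}<\infty$, and $|T_{n}|\leq1$ does the rest; near $k=0$ unitarity takes over, exactly as in the paper's $8\varepsilon$ term. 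The trade-off is that the paper's route yields an explicit rate on each tail region, $\|(R-R_{n})\chi_{|k|>\varepsilon}\|\lesssim_{q}\varepsilon^{-3/2}\|q-q_{n}\|_{L_{1}^{1}}$, at the cost of juggling two limits, whereas your route is qualitative but cleaner and requires the extra (true, and easily obtained) uniform decay $|R_{\pm,n}(k)|\lesssim_{M}|k|^{-1}$, which the paper's argument never needs since it only ever estimates differences. Both proofs are sound; yours is a genuine alternative packaging of the same scattering-theoretic input.
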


\begin{proof}
We consider the $+$ case only and we suppress $+$ sign. We use the following a
priori estimates (see e.g. \cite{Deift79})%
\begin{equation}
\left\vert y_{-}\left(  x,k\right)  \right\vert \lesssim_{q}\left\langle
x\right\rangle /\left\langle k\right\rangle \label{est for y}%
\end{equation}%
\begin{equation}
\left\vert y_{-}\left(  x,k\right)  -y_{-,n}\left(  x,k\right)  \right\vert
\lesssim_{q}\left\langle x\right\rangle \left\Vert q-q_{n}\right\Vert
_{L_{1}^{1}} \label{est for y-yn}%
\end{equation}%
\begin{equation}
\left\vert T\left(  k\right)  -T_{n}\left(  k\right)  \right\vert \lesssim
_{q}\left\vert k\right\vert ^{-1}\left\Vert q-q_{n}\right\Vert _{L_{1}^{1}}.
\label{est for T-Tn}%
\end{equation}

Consider $\left\Vert R-R_{n}\right\Vert ^{2}$ and rewrite it as ($\varepsilon$
is any)%
\begin{align}
\left\Vert R-R_{n}\right\Vert ^{2}  &  =\left\Vert \left(  R-R_{n}\right)
\chi_{\left\vert \cdot\right\vert \leq\varepsilon}\right\Vert ^{2}+\left\Vert
\left(  R-R_{n}\right)  \chi_{\left\vert \cdot\right\vert >\varepsilon
}\right\Vert ^{2}\label{est for L2 norm of R-Rn}\\
&  \leq8\varepsilon+\left\Vert \left(  R-R_{n}\right)  \chi_{\left\vert
\cdot\right\vert >\varepsilon}\right\Vert ^{2}.\nonumber
\end{align}
It follows from the general formula \cite{Deift79}%
\begin{equation}
R\left(  k\right)  =\frac{T\left(  k\right)  }{2\mathrm{i}k}\int
e^{-2\mathrm{i}kx}q\left(  x\right)  \left(  1+y_{-}\left(  x,k\right)
\right)  \mathrm{d}x \label{rep for R}%
\end{equation}
that%
\begin{align*}
R\left(  k\right)  -R_{n}\left(  k\right)   &  =\frac{T\left(  k\right)
-T_{n}\left(  k\right)  }{2\mathrm{i}k}\int e^{-2\mathrm{i}kx}q\left(
x\right)  \left(  1+y_{-}\left(  x,k\right)  \right)  \mathrm{d}x\\
&  +\frac{T_{n}\left(  k\right)  }{2\mathrm{i}k}\int e^{-2\mathrm{i}kx}\left(
q\left(  x\right)  -q_{n}\left(  x\right)  \right)  \mathrm{d}x\\
&  +\frac{T_{n}\left(  k\right)  }{2\mathrm{i}k}\int e^{-2\mathrm{i}%
kx}q\left(  x\right)  \left(  y_{-}\left(  x,k\right)  -y_{-,n}\left(
x,k\right)  \right)  \mathrm{d}x\\
&  =I_{1}\left(  k\right)  +I_{2}\left(  k\right)  +I_{3}\left(  k\right)
\end{align*}
and hence%
\[
\left\Vert \left(  R-R_{n}\right)  \chi_{\left\vert \cdot\right\vert
>\varepsilon}\right\Vert \leq\left\Vert I_{1}\chi_{\left\vert \cdot\right\vert
>\varepsilon}\right\Vert +\left\Vert I_{2}\chi_{\left\vert \cdot\right\vert
>\varepsilon}\right\Vert +\left\Vert I_{3}\chi_{\left\vert \cdot\right\vert
>\varepsilon}\right\Vert .
\]
Estimate each term separately. For $\left\Vert I_{1}\chi_{\left\vert
\cdot\right\vert >\varepsilon}\right\Vert $ we have%
\begin{align*}
\left\Vert I_{1}\chi_{\left\vert \cdot\right\vert >\varepsilon}\right\Vert
^{2}  &  \lesssim_{q}\int\left\vert q\right\vert \left(  1+\left\vert
y_{-}\left(  x,k\right)  \right\vert \right)  \mathrm{d}x\cdot\int_{\left\vert
k\right\vert >\varepsilon}\left\vert \frac{T\left(  k\right)  -T_{n}\left(
k\right)  }{k}\right\vert ^{2}\mathrm{d}k\\
&  \lesssim_{q}\left\Vert q-q_{n}\right\Vert _{L_{1}^{1}}^{2}\int_{\left\vert
k\right\vert >\varepsilon}k^{-4}\mathrm{d}k\text{\ \ (by (\ref{est for y}%
),(\ref{est for T-Tn}))}\\
&  \lesssim\varepsilon^{-3}\left\Vert q-q_{n}\right\Vert _{L_{1}^{1}}^{2}.
\end{align*}
Thus%
\[
\left\Vert I_{1}\chi_{\left\vert \cdot\right\vert >\varepsilon}\right\Vert
\lesssim_{q}\varepsilon^{-3/2}\left\Vert q-q_{n}\right\Vert _{L_{1}^{1}}.
\]
For $\left\Vert I_{2}\chi_{\left\vert \cdot\right\vert >\varepsilon
}\right\Vert $ we have%
\begin{align*}
\left\Vert I_{2}\chi_{\left\vert \cdot\right\vert >\varepsilon}\right\Vert
^{2}  &  \leq\left\Vert q-q_{n}\right\Vert _{L^{1}}^{2}\int_{\left\vert
k\right\vert >\varepsilon}\left\vert \frac{T_{n}\left(  k\right)  }%
{k}\right\vert ^{2}\mathrm{d}k\\
&  \lesssim\frac{1}{\varepsilon}\left\Vert q-q_{n}\right\Vert _{L_{1}^{1}}^{2}%
\end{align*}
and hence%
\[
\left\Vert I_{2}\chi_{\left\vert \cdot\right\vert >\varepsilon}\right\Vert
\lesssim_{q}\varepsilon^{-1/2}\left\Vert q-q_{n}\right\Vert _{L^{1}}.
\]
Finally for $\left\Vert I_{2}\chi_{\left\vert \cdot\right\vert >\varepsilon
}\right\Vert $ one has in a similar manner%
\begin{align*}
\left\Vert I_{3}\chi_{\left\vert \cdot\right\vert >\varepsilon}\right\Vert  &
\leq_{q}\varepsilon^{-1/2}\sup_{k}\left\vert \int e^{-2\mathrm{i}kx}q\left(
x\right)  \left(  y_{-}\left(  x,k\right)  -y_{-,n}\left(  x,k\right)
\right)  \mathrm{d}x\right\vert \\
&  \leq_{q}\varepsilon^{-1/2}\left\Vert q-q_{n}\right\Vert _{L_{1}^{1}%
}\ \text{\ (by (\ref{est for y-yn}))}%
\end{align*}
and hence%
\[
\left\Vert I_{3}\chi_{\left\vert \cdot\right\vert >\varepsilon}\right\Vert
\lesssim_{q}\varepsilon^{-1/2}\left\Vert q-q_{n}\right\Vert _{L_{1}^{1}}.
\]
One can now sees that each $\left\Vert I_{j}\chi_{\left\vert \cdot\right\vert
>\varepsilon}\right\Vert $, $j=1,2,3,$ vanishes as $\left\Vert q-q_{n}%
\right\Vert _{L_{1}^{1}}$ does and hence, since $\varepsilon$ is arbitrary, it
follows from (\ref{est for L2 norm of R-Rn}) that%
\[
\left\Vert R-R_{n}\right\Vert \rightarrow0,n\rightarrow\infty.
\]

\end{proof}

Note that the question in what sense the reflection coefficient converges when
we approximate the potential in a certain way is a subtle one
\cite{RemlingCMP15}.

Finally we observe that $\psi_{\pm},y_{\pm},T,R_{\pm}$ as functions of $k$
(momentum) satisfy%
\begin{equation}
\left(  \mathbb{J}f\right)  \left(  k\right)  =f\left(  -k\right)
=\overline{f}\left(  k\right)  \text{ \ \ \ (symmetry property).} \label{symm}%
\end{equation}

\section{Trace formulas\label{sect on trace formulas}}

In this section we put forward a new approach to generate Deift-Trubowitz type
trace formulas. It is based on Hardy spaces and Hankel operators.

\begin{theorem}
\label{thm on trace} Suppose that $q\in L^{1}$ and%
\[
Q_{+}\left(  x\right)  :=\int_{x}^{\infty}q\left(  s\right)  \mathrm{d}%
s,Q_{-}\left(  x\right)  :=\int_{-\infty}^{x}q\left(  s\right)  \mathrm{d}s.
\]
Let $\psi_{\pm}(x,k)$ be right/left Jost solution and%
\[
y_{\pm}\left(  k,x\right)  =e^{\mp\mathrm{i}kx}\psi_{\pm}(x,k)-1.
\]
If for all real $x$%
\begin{equation}
2\mathrm{i}ky_{\pm}\left(  k,x\right)  +Q_{\pm}\left(  x\right)  \in H^{2},
\label{cond on y}%
\end{equation}
then for any $\alpha>0$ and a.e. $x$%
\begin{equation}
q\left(  x\right)  =\mp\frac{2}{\pi}\partial_{x}\int\operatorname{Re}%
\frac{y_{\pm}\left(  k,x\right)  }{k+\mathrm{i}\alpha}k\mathrm{d}k\text{
(trace formula).} \label{trace formula 1}%
\end{equation}
If for every real $x$%
\begin{equation}
y_{\pm}\left(  \cdot,x\right)  \in H^{2} \label{cond on y 1}%
\end{equation}
then (\ref{trace formula 1}) simplifies to read%
\begin{equation}
q\left(  x\right)  =\mp\frac{2}{\pi}\partial_{x}\int\operatorname{Re}y_{\pm
}\left(  k,x\right)  \mathrm{d}k. \label{trace formula2}%
\end{equation}

\end{theorem}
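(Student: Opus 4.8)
The plan is to show that, for each fixed real $x$, the $\alpha$-regularized integral in \eqref{trace formula 1} equals $\tfrac{\pi}{2}Q_{\pm}(x)$ \emph{independently of} $\alpha>0$; the trace formula then follows immediately by differentiating the closed form in $x$ and using $\partial_{x}Q_{\pm}=\mp q$ a.e. (valid since $q\in L^{1}$, with $Q_{\pm}$ absolutely continuous), which accounts both for the sign $\mp$ and for the qualifier ``a.e. $x$''. Indeed, $\mp\tfrac{2}{\pi}\partial_{x}\bigl(\tfrac{\pi}{2}Q_{\pm}\bigr)=\mp\partial_{x}Q_{\pm}=q$. I treat the $+$ case; the $-$ case is identical after replacing $Q_{+}$ by $Q_{-}$ and using $\partial_{x}Q_{-}=q$.

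The first step is to absorb the apparent singularity at $k=0$. Setting $g:=2\mathrm{i}ky_{+}+Q_{+}$, which lies in $H^{2}=H_{+}^{2}$ by hypothesis \eqref{cond on y}, and solving $y_{+}=(g-Q_{+})/(2\mathrm{i}k)$, the integrand collapses to the regular expression
\[
\frac{y_{+}(k,x)\,k}{k+\mathrm{i}\alpha}=\frac{g(k,x)-Q_{+}(x)}{2\mathrm{i}(k+\mathrm{i}\alpha)},
\]
whose numerator vanishes at $k=0$ (the two factors of $k$ cancel). I split this into the piece $h:=g/[2\mathrm{i}(k+\mathrm{i}\alpha)]$ and the piece $-Q_{+}/[2\mathrm{i}(k+\mathrm{i}\alpha)]$, and integrate the real parts separately. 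Since $g\in H_{+}^{2}$ and $(k+\mathrm{i}\alpha)^{-1}\in H_{+}^{2}$, the product $h$ belongs to $H_{+}^{1}\subset L^{1}$ and satisfies $\langle k\rangle h\in L^{2}$; hence $\mathbb{P}_{-}h=0$, and Lemma \ref{lemma on PV} gives $\int h=2\,(PV)\!\int\mathbb{P}_{-}h=0$, so $\int\operatorname{Re}h=0$ as well. For the second piece I compute directly $\operatorname{Re}\bigl[-Q_{+}/(2\mathrm{i}(k+\mathrm{i}\alpha))\bigr]=\tfrac{Q_{+}}{2}\cdot\alpha/(k^{2}+\alpha^{2})$, the Poisson kernel, whose integral over $\mathbb{R}$ equals $\tfrac{\pi}{2}Q_{+}$ for every $\alpha>0$. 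Combining the two pieces yields $\int\operatorname{Re}\tfrac{y_{+}k}{k+\mathrm{i}\alpha}\,\mathrm{d}k=\tfrac{\pi}{2}Q_{+}(x)$, and differentiation establishes \eqref{trace formula 1}.

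For the simplification under \eqref{cond on y 1}, I read \eqref{trace formula2} as asserted, namely that $y_{\pm}\in H^{2}$ is imposed \emph{in addition} to the setting of \eqref{trace formula 1}, so that \eqref{trace formula 1} is already available. It then remains only to remove the regularizing factor, i.e.\ to show $\int\operatorname{Re}y_{+}\,\mathrm{d}k=\int\operatorname{Re}\tfrac{y_{+}k}{k+\mathrm{i}\alpha}\,\mathrm{d}k$. Their difference has integrand $\operatorname{Re}\bigl[-\mathrm{i}\alpha\,y_{+}/(k+\mathrm{i}\alpha)\bigr]$, and since $y_{+}\in H_{+}^{2}$ the function $-\mathrm{i}\alpha\,y_{+}/(k+\mathrm{i}\alpha)$ again lies in $H_{+}^{1}$ with $\langle k\rangle$-weight in $L^{2}$; by the same application of Lemma \ref{lemma on PV} its integral vanishes, so the difference is $0$ and \eqref{trace formula2} follows. (Here the symmetry $\overline{y_{+}}(k)=y_{+}(-k)$ from \eqref{symm} guarantees that $\operatorname{Re}y_{+}$ is even and that $\int\operatorname{Re}y_{+}\,\mathrm{d}k$ is a real, convergent integral, the leading $1/k$ tail of $y_{+}$ being purely imaginary.)

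The main obstacle I anticipate is the bookkeeping of conditional convergence: before taking real parts the integrals are not absolutely convergent, so I must verify that $\operatorname{Re}$ commutes with the (principal value) integration, and---crucially---that the borderline term $-Q_{+}/[2\mathrm{i}(k+\mathrm{i}\alpha)]$, which does \emph{not} satisfy the hypothesis of Lemma \ref{lemma on PV} because $\langle k\rangle/(k+\mathrm{i}\alpha)\notin L^{2}$, is handled directly via the Poisson kernel rather than through the projection lemma. This is precisely why the regularizer $k/(k+\mathrm{i}\alpha)$ cannot be dropped under the bare hypothesis \eqref{cond on y}. The remaining points---the $\langle k\rangle$-weighted $L^{2}$ membership of the $H^{1}_{+}$ pieces, and the passage of $\partial_{x}$, which I sidestep by differentiating the closed form $\tfrac{\pi}{2}Q_{\pm}$ instead of under the integral sign---are routine.
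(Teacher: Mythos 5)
Your proof is correct and follows essentially the same route as the paper's: both exploit the hypothesis $2\mathrm{i}ky_{\pm}+Q_{\pm}\in H^{2}$ by pairing with $1/(k+\mathrm{i}\alpha)\in H^{2}$ to land in $H^{1}$ (whose boundary values integrate to zero), isolate the Poisson-kernel term $\alpha/(k^{2}+\alpha^{2})$ contributing $\tfrac{\pi}{2}Q_{\pm}$, and remove the regularizer via the $H^{1}$ function $\mathrm{i}\alpha y_{\pm}/(k+\mathrm{i}\alpha)$. The only cosmetic difference is that you route the mean-zero property of $H^{1}$ boundary values through Lemma \ref{lemma on PV}, where the paper invokes that fact directly.
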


\begin{proof}
Note first that both Jost solutions exist for $q\in L^{1}$ (not only for
$L_{1}^{1}$). Multiplying (\ref{cond on y}) by $\mathrm{i}/\left(
k+\mathrm{i}\alpha\right)  \in H^{2}$ ($\alpha>0$) and recalling that a
product of two $H^{2}$ functions is in $H^{1}$, we have%
\begin{equation}
\frac{2k}{k+\mathrm{i}\alpha}y_{\pm}\left(  k,x\right)  -\frac{\mathrm{i}%
}{k+\mathrm{i}\alpha}Q_{\pm}\left(  x\right)  \in H^{1}. \label{eq 1}%
\end{equation}
But it is well-known that%
\begin{equation}
f\in H^{1}\Longrightarrow\int f\left(  k+\mathrm{i}0\right)  \mathrm{d}k=0
\label{eq 2}%
\end{equation}
and therefore%
\[
\int\left[  \frac{2k}{k+\mathrm{i}\alpha}y_{\pm}\left(  k,x\right)
-\frac{\mathrm{i}}{k+\mathrm{i}\alpha}Q_{\pm}\left(  x\right)  \right]
\mathrm{d}k=0.
\]
For its real part we have%
\[
\int\left\{  \operatorname{Re}\left[  \frac{2k}{k+\mathrm{i}\alpha}y_{\pm
}\left(  k,x\right)  \right]  -\frac{\alpha}{k^{2}+\alpha^{2}}Q_{\pm}\left(
x\right)  \right\}  \mathrm{d}k=0,
\]
which can be rearranged to read%
\[
\pi Q_{\pm}\left(  x\right)  =2\int\operatorname{Re}\left[  \frac{y_{\pm
}\left(  k,x\right)  }{k+\mathrm{i}\alpha}\right]  k\mathrm{d}k
\]
and (\ref{trace formula 1}) follows upon differentiating in $x$.

We show now (\ref{trace formula2}). To this end, we just split
(\ref{trace formula 1}) as%
\begin{align*}
\int\operatorname{Re}\frac{ky_{\pm}\left(  k,x\right)  }{k+\mathrm{i}\alpha
}\mathrm{d}k  &  =\int\operatorname{Re}y_{\pm}\left(  k,x\right)
\mathrm{d}k\\
&  +\alpha\operatorname{Im}\int\frac{y_{\pm}\left(  k,x\right)  }%
{k+\mathrm{i}\alpha}\mathrm{d}k
\end{align*}
and observe that by (\ref{eq 2}) the second integral on the right hand side is
zero (both $y_{\pm}$ and $1/\left(  k+\mathrm{i}\alpha\right)  $ are in
$H^{2}$).
\end{proof}

\begin{remark}
Under the condition $q\in L_{1}^{1}$ the following formula is proven in
\cite{Deift79} (only $+$ sign is considered):%
\begin{align}
q\left(  x\right)   &  =-4%
{\displaystyle\sum_{n=1}^{N}}
\kappa_{n}c_{+,n}^{2}\psi_{+}\left(  x,\mathrm{i}\kappa_{n}\right)
^{2}\label{DT trace}\\
&  +\frac{2\mathrm{i}}{\pi}\left(  PV\right)  \int R_{+}\left(  k\right)
\psi_{+}\left(  x,k\right)  ^{2}k\mathrm{d}k.\text{ (Deift-Trubowitz trace
formula)}\nonumber
\end{align}
Visually it is very different from (\ref{trace formula2}) ($\psi_{+}\left(
x,k\right)  $ appears in (\ref{DT trace}) squared whereas in
(\ref{trace formula2}) it does not). One can however show that
(\ref{trace formula2}) implies (\ref{DT trace}). We demonstrate this fact in
the Appendix. Theorem \ref{thm on trace} is an extension of (\ref{DT trace})
as it accepts certain singularities of $\psi_{\pm}\left(  k,x\right)  $ at
$k=0$. The latter may occur if $q\notin L_{1}^{1}$. Thus following the
terminology of \cite{Deift79} we may refer to our (\ref{trace formula 1}) and
(\ref{trace formula2}) as trace formulas.
\end{remark}

The next statement offers a version of (\ref{DT trace}) that is linear with
respect to the Jost solution $\psi_{\pm}$.

\begin{corollary}
\label{Corollary 1}Suppose $q\in L_{1}^{1}\cap L^{2}$ and let $\left\{
R_{\pm},\kappa_{n},c_{\pm,n}\right\}  $ be its scattering data. Then%
\begin{equation}
q\left(  x\right)  =\pm\partial_{x}\left\{  2%
{\displaystyle\sum_{n=1}^{N}}
c_{\pm,n}^{2}e^{\mp\kappa_{n}x}\psi_{\pm}(x,\mathrm{i}\kappa_{n})+\frac{1}%
{\pi}\int e^{\pm\mathrm{i}kx}R_{\pm}\left(  k\right)  \psi_{\pm}%
(x,k)\mathrm{d}k\right\}  . \label{trace formula 3}%
\end{equation}

\end{corollary}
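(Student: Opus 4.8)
The plan is to derive (\ref{trace formula 3}) directly from the linear trace formula (\ref{trace formula2}) by substituting the Marchenko representation of $y_\pm$ and evaluating the resulting integral term by term. First I would observe that $q\in L_1^1\cap L^2\subset L^1$, so both Jost solutions exist, and that $y_\pm(\cdot,x)\in H^2$ by (\ref{sltn to March eq}); hence condition (\ref{cond on y 1}) holds and Theorem \ref{thm on trace} gives $q(x)=\mp\frac{2}{\pi}\partial_x\int\operatorname{Re}y_\pm(k,x)\,dk$. The symmetry property (\ref{symm}), $y_\pm(-k,x)=\overline{y_\pm(k,x)}$, shows that the symmetric truncations of $\int y_\pm\,dk$ are real and coincide with those of $\int\operatorname{Re}y_\pm\,dk$, so that $\int\operatorname{Re}y_\pm\,dk=(PV)\int y_\pm\,dk$, and it suffices to evaluate the latter.

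Next I would feed in the Marchenko equation. From (\ref{Marchenko eq})--(\ref{Hankel}) we have $y_\pm=-\mathbb{H}(\varphi_\pm)(1+y_\pm)=-\mathbb{J}\mathbb{P}_-[\varphi_\pm(1+y_\pm)]$, and using $1+y_\pm=e^{\mp\mathrm{i}kx}\psi_\pm$ together with the change of variable $k\mapsto-k$ (which absorbs the reflection $\mathbb{J}$) gives $(PV)\int y_\pm\,dk=-(PV)\int\mathbb{P}_-[\varphi_\pm(1+y_\pm)]\,dk$. I would then split $\varphi_\pm$ as in (\ref{Symbol}) into its reflection part $R_\pm(k)e^{\pm2\mathrm{i}kx}$ and its rational (bound-state) part and treat the two pieces separately.

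For the reflection part the relevant function is $R_\pm(k)e^{\pm\mathrm{i}kx}\psi_\pm(x,k)$; since $\langle k\rangle R_\pm\in L^2$ (this is where I would invoke the second Zakharov--Faddeev trace formula (\ref{ZF trace 2}), valid for $q\in L^1\cap L^2$, giving $kR_\pm\in L^2$) and $1+y_\pm$ is bounded in $k$ by (\ref{est for y}), Lemma \ref{lemma on PV} applies and produces $\frac{1}{2}\int R_\pm(k)e^{\pm\mathrm{i}kx}\psi_\pm(x,k)\,dk$. For the rational part Lemma \ref{lemma on PV} fails (the integrand decays only like $1/k$), and this is the step I expect to be the main obstacle. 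Here I would exploit analyticity: since $y_\pm(\cdot,x)\in H^2$ is analytic in the upper half-plane, the partial-fraction identity $\frac{1+y_\pm(k)}{k-\mathrm{i}\kappa_n}=\frac{1+y_\pm(\mathrm{i}\kappa_n)}{k-\mathrm{i}\kappa_n}+\frac{y_\pm(k)-y_\pm(\mathrm{i}\kappa_n)}{k-\mathrm{i}\kappa_n}$ splits the function into an $H^2_-$ summand carrying the pole at $\mathrm{i}\kappa_n$ and an $H^2_+$ summand with a removable singularity, so that $\mathbb{P}_-\frac{1+y_\pm}{k-\mathrm{i}\kappa_n}=\frac{1+y_\pm(\mathrm{i}\kappa_n,x)}{k-\mathrm{i}\kappa_n}$. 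Combined with the elementary $(PV)\int(k-\mathrm{i}\kappa_n)^{-1}\,dk=\mathrm{i}\pi$, this yields the bound-state contribution $-\pi\sum_n c_{\pm,n}^2 e^{\mp2\kappa_n x}(1+y_\pm(\mathrm{i}\kappa_n,x))$.

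Finally I would reassemble. Using $\psi_\pm(x,\mathrm{i}\kappa_n)=e^{\mp\kappa_n x}(1+y_\pm(\mathrm{i}\kappa_n,x))$ to rewrite $e^{\mp2\kappa_n x}(1+y_\pm(\mathrm{i}\kappa_n,x))=e^{\mp\kappa_n x}\psi_\pm(x,\mathrm{i}\kappa_n)$, adding the two contributions, and applying $\mp\frac{2}{\pi}\partial_x$, the factors $\mp\frac{2}{\pi}\cdot(-\pi)=\pm2$ and $\mp\frac{2}{\pi}\cdot(-\tfrac12)=\pm\frac{1}{\pi}$ reproduce exactly the two terms inside the braces of (\ref{trace formula 3}). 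Throughout, the delicate points are the justification of the conditionally convergent $(PV)$ integrals and their interchange with the Hankel and Riesz projection operators; all of these are controlled by the a priori bound (\ref{est for y}) together with $\langle k\rangle R_\pm\in L^2$.
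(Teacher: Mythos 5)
Your proposal is correct and follows essentially the same route as the paper: start from (\ref{trace formula2}), feed in the Marchenko equation, split the symbol (\ref{Symbol}) into reflection and bound-state parts, apply Lemma \ref{lemma on PV} to the reflection part using $\left\langle k\right\rangle R_{\pm}\in L^{2}$, and evaluate the bound-state part by the explicit principal-value integral (\ref{integral}) together with analyticity of $y_{\pm}$ at $\mathrm{i}\kappa_{n}$. The only (cosmetic) differences are that you work with $\left(  PV\right)  \int y_{\pm}$ rather than $\int\operatorname{Re}y_{\pm}$ and compute $\mathbb{P}_{-}\frac{1+y_{\pm}}{k-\mathrm{i}\kappa_{n}}$ by a partial-fraction identity where the paper instead applies the Cauchy integral formula to $\int\varphi_{\pm}y_{\pm}$.
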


\begin{proof}
As is well-known (see e.g. \cite{MarchBook2011}), for $q\in L^{1}$%
\begin{equation}
y_{\pm}\left(  k,x\right)  =\frac{\mathrm{i}}{2k}Q_{\pm}\left(  x\right)
+O\left(  k^{-2}\right)  ,\ \ \ k\rightarrow\pm\infty, \label{asympt}%
\end{equation}
and furthermore $y_{\pm}\left(  k,x\right)  $ is bounded at $k=0$ for $q\in
L_{1}^{1}$. It immediately follows that the condition (\ref{cond on y}) is
satisfied. Also, the condition (\ref{cond on y 1}) holds due to
(\ref{sltn to March eq}). Therefore (\ref{trace formula2}) holds for
short-range $q$. To show (\ref{trace formula 3}) we turn to the Marchenko
equation (\ref{Marchenko eq}). Applying the operator of reflection
$\mathbb{J}$ to this equation and recalling the symmetry property (\ref{symm})
we have%
\[
\overline{y}_{\pm}+\mathbb{P}_{-}(\varphi_{\pm}y_{\pm})=-\mathbb{P}_{-}%
\varphi_{\pm},
\]
which together with (\ref{Marchenko eq}) yield%
\[
2\operatorname{Re}y_{\pm}=-\mathbb{JP}_{-}(\varphi_{\pm}y_{\pm})-\mathbb{P}%
_{-}(\varphi_{\pm}y_{\pm})-\mathbb{JP}_{-}\varphi_{\pm}-\mathbb{P}_{-}%
\varphi_{\pm}.
\]
Since obviously%
\[
\int_{-a}^{a}\mathbb{J}f=\int_{-a}^{a}f
\]
we have%
\begin{equation}
\int_{-a}^{a}\operatorname{Re}y_{\pm}=-\int_{-a}^{a}\mathbb{P}_{-}\varphi
_{\pm}-\int_{-a}^{a}\mathbb{P}_{-}\left(  \varphi_{\pm}y_{\pm}\right)  .
\label{int from -a to a}%
\end{equation}
Consider each term on the right hand side of (\ref{int from -a to a}).
Observing that $\left(  k-\mathrm{i}\kappa_{n}\right)  ^{-1}\in H_{-}^{2}$ and
hence $\mathbb{P}_{-}\left(  k-\mathrm{i}\kappa_{n}\right)  ^{-1}=\left(
k-\mathrm{i}\kappa_{n}\right)  ^{-1}$, we have%
\begin{align*}
\mathbb{P}_{-}\varphi_{\pm}  &  =\mathbb{P}_{-}\left[
{\displaystyle\sum_{n=1}^{N}}
\frac{-\mathrm{i}c_{\pm,n}^{2}e^{\mp2\kappa_{n}x}}{k-\mathrm{i}\kappa_{n}%
}\right]  +\mathbb{P}_{-}\left[  R_{\pm}\left(  k\right)  e^{\pm2\mathrm{i}%
kx}\right] \\
&  =%
{\displaystyle\sum_{n=1}^{N}}
\frac{-\mathrm{i}c_{\pm,n}^{2}e^{\mp2\kappa_{n}x}}{k-\mathrm{i}\kappa_{n}%
}+\mathbb{P}_{-}\left[  R_{\pm}\left(  k\right)  e^{\pm2\mathrm{i}kx}\right]
\end{align*}
and thus%
\begin{align}
\int_{-a}^{a}\mathbb{P}_{-}\varphi_{\pm}  &  =\int_{-a}^{a}%
{\displaystyle\sum_{n=1}^{N}}
\frac{-\mathrm{i}c_{\pm,n}^{2}e^{\mp2\kappa_{n}x}}{k-\mathrm{i}\kappa_{n}%
}\mathrm{d}k+\int_{-a}^{a}\mathbb{P}_{-}\left[  R_{\pm}\left(  k\right)
e^{\pm2\mathrm{i}kx}\right]  \mathrm{d}k\nonumber\\
&  =-\mathrm{i}%
{\displaystyle\sum_{n=1}^{N}}
c_{\pm,n}^{2}e^{\mp2\kappa_{n}x}\int_{-a}^{a}\frac{\mathrm{d}k}{k-\mathrm{i}%
\kappa_{n}}+\int_{-a}^{a}\mathbb{P}_{-}\left[  R_{\pm}\left(  k\right)
e^{\pm2\mathrm{i}kx}\right]  \mathrm{d}k. \label{prelimit}%
\end{align}
Pass in (\ref{prelimit}) now to the limit as $a\rightarrow\infty$. By
(\ref{integral})%
\[
\lim_{a\rightarrow\infty}\int_{-a}^{a}\frac{\mathrm{d}k}{k-\mathrm{i}%
\kappa_{n}}=\lim_{a\rightarrow\infty}\int_{-a}^{a}\frac{\mathrm{d}%
k}{k-\mathrm{i}}=\pi\mathrm{i}%
\]
and hence substituting this into (\ref{prelimit}) one has%
\begin{align*}
\left(  PV\right)  \int_{-a}^{a}\mathbb{P}_{-}\varphi_{\pm}  &  =\pi%
{\displaystyle\sum_{n=1}^{N}}
c_{\pm,n}^{2}e^{\mp2\kappa_{n}x}\\
&  +\left(  PV\right)  \int_{-a}^{a}\mathbb{P}_{-}\left[  R_{\pm}\left(
k\right)  e^{\pm2\mathrm{i}kx}\right]  \mathrm{d}k.
\end{align*}
It remains to evaluate the integral on the right hand side. As we have shown
in Section \ref{Sect Overview}, $\left\langle k\right\rangle R_{\pm}\left(
k\right)  \in L^{2}$. By Lemma \ref{lemma on PV} then%
\[
\left(  PV\right)  \int_{-a}^{a}\mathbb{P}_{-}\left[  R_{\pm}\left(  k\right)
e^{\pm2\mathrm{i}kx}\right]  \mathrm{d}k=\frac{1}{2}\int R_{\pm}\left(
k\right)  e^{\pm2\mathrm{i}kx}\mathrm{d}k
\]
and finally%
\begin{align*}
\left(  PV\right)  \int\mathbb{P}_{-}\varphi_{\pm}  &  =\pi%
{\displaystyle\sum_{n=1}^{N}}
c_{\pm,n}^{2}e^{\mp2\kappa_{n}x}\\
&  +\frac{1}{2}\int R_{\pm}\left(  k\right)  e^{\pm2\mathrm{i}kx}\mathrm{d}k.
\end{align*}
Similarly,%
\[
\left(  PV\right)  \int\mathbb{P}_{-}\left(  \varphi_{\pm}y_{\pm}\right)
=\frac{1}{2}\int\varphi_{\pm}y_{\pm}%
\]
and from (\ref{int from -a to a}) we obtain%
\begin{align*}
\int\operatorname{Re}y_{\pm}  &  =-\int\mathbb{P}_{-}\varphi_{\pm}%
-\int\mathbb{P}_{-}\left(  \varphi_{\pm}y_{\pm}\right) \\
&  =-\pi%
{\displaystyle\sum_{n=1}^{N}}
c_{\pm,n}^{2}e^{\mp2\kappa_{n}x}-\frac{1}{2}\int R_{\pm}\left(  k\right)
e^{\pm2\mathrm{i}kx}\mathrm{d}k\\
&  -\frac{1}{2}\int\varphi_{\pm}y_{\pm}.
\end{align*}
Inserting this into (\ref{trace formula2}) one has%
\begin{align*}
q\left(  x\right)   &  =\mp\frac{2}{\pi}\partial_{x}\int\operatorname{Re}%
y_{\pm}\left(  k,x\right)  \mathrm{d}k\\
&  =\pm\partial_{x}\left\{  2%
{\displaystyle\sum_{n=1}^{N}}
c_{\pm,n}^{2}e^{\mp2\kappa_{n}x}+\int R_{\pm}\left(  k\right)  e^{\pm
2\mathrm{i}kx}\frac{\mathrm{d}k}{\pi}+\int\varphi_{\pm}\left(  k,x\right)
y_{\pm}\left(  k,x\right)  \frac{\mathrm{d}k}{\pi}\right\}  .
\end{align*}
It remains to evaluate the last integral on the right hand side. By the Cauchy
formula%
\begin{align*}
\int\varphi_{\pm}y_{\pm}  &  =\int%
{\displaystyle\sum_{n=1}^{N}}
\frac{-\mathrm{i}c_{\pm,n}^{2}e^{\mp2\kappa_{n}x}}{k-\mathrm{i}\kappa_{n}%
}y_{\pm}\left(  k,x\right) \\
&  +\int R_{\pm}\left(  k\right)  e^{\pm2\mathrm{i}kx}y_{\pm}\left(
k,x\right)  \mathrm{d}k\\
&  =2\pi%
{\displaystyle\sum_{n=1}^{N}}
c_{\pm,n}^{2}e^{\mp2\kappa_{n}x}y_{\pm}\left(  \mathrm{i}\kappa_{n},x\right)
+\int R_{\pm}\left(  k\right)  e^{\pm2\mathrm{i}kx}y_{\pm}\left(  k,x\right)
\mathrm{d}k
\end{align*}
and hence%
\[
q\left(  x\right)  =\pm\partial_{x}\left\{  2%
{\displaystyle\sum_{n=1}^{N}}
c_{\pm,n}^{2}e^{\mp2\kappa_{n}x}\left[  1+y_{\pm}\left(  \mathrm{i}\kappa
_{n},x\right)  \right]  +\int R_{\pm}\left(  k\right)  e^{\pm2\mathrm{i}%
kx}\left[  1+y_{\pm}\left(  k,x\right)  \right]  \frac{\mathrm{d}k}{\pi
}\right\}  .
\]
Recalling that $y_{\pm}\left(  k,x\right)  =e^{\mp\mathrm{i}kx}\psi_{\pm
}(x,k)-1$ we finally obtain%
\[
q\left(  x\right)  =\pm\partial_{x}\left\{  2%
{\displaystyle\sum_{n=1}^{N}}
c_{\pm,n}^{2}e^{\mp\kappa_{n}x}\psi_{\pm}(x,\mathrm{i}\kappa_{n})+\int
e^{\pm\mathrm{i}kx}R_{\pm}\left(  k\right)  \psi_{\pm}(x,k)\frac{\mathrm{d}%
k}{\pi}\right\}  ,
\]
which is (\ref{trace formula 3}).
\end{proof}

\begin{remark}
It follows from (\ref{asympt}) that $y_{\pm}\left(  k,x\right)  \notin L^{1}$
but $\operatorname{Re}y_{\pm}\left(  k,x\right)  \in L^{1}$.
\end{remark}

An important corollary of Theorem \ref{thm on trace} is the following

\begin{theorem}
\label{thm on trace formula 4}Suppose that $q\in L_{1}^{1}$. Let $\left\{
R,\kappa_{n},c_{n}\right\}  $ be its right scattering data and $\psi_{0}(x,k)$
be the right Jost solution corresponding to the data $\left\{  R,\varnothing
\right\}  $. Denote%
\[
\boldsymbol{\Psi}_{0}\left(  x\right)  =\left(  \psi_{0}\left(  x,\mathrm{i}%
\kappa_{n}\right)  \right)  ,\ \ \ \boldsymbol{C}=\operatorname*{diag}\left(
c_{n}^{2}\right)  .
\]
Then%
\begin{align}
q\left(  x\right)   &  =q_{0}\left(  x\right) \label{trace formula 4}\\
&  +2\partial_{x}\boldsymbol{\Psi}_{0}\left(  x\right)  \left(  \boldsymbol{C}%
^{-1}+\int_{x}^{\infty}\boldsymbol{\Psi}_{0}\left(  s\right)  ^{T}%
\boldsymbol{\Psi}_{0}\left(  s\right)  \mathrm{d}s\right)  ^{-1}%
\boldsymbol{\Psi}_{0}\left(  x\right)  ^{T},\nonumber
\end{align}
where $q_{0}\left(  x\right)  $ admits the following representations%
\begin{align*}
q_{0}\left(  x\right)   &  =2\partial_{x}\left\{  \int\operatorname{Re}\left[
1-e^{-\mathrm{i}kx}\psi_{0}(x,k)\right]  \frac{\mathrm{d}k}{\pi}\right\} \\
&  =\partial_{x}\left(  PV\right)  \int e^{\mathrm{i}kx}R\left(  k\right)
\psi_{0}(x,k)\frac{\mathrm{d}k}{\pi}\\
&  =\partial_{x}^{2}\int\frac{e^{2\mathrm{i}kx}-1}{2\mathrm{i}k}R\left(
k\right)  \frac{\mathrm{d}k}{\pi}+\partial_{x}\int e^{2\mathrm{i}kx}R\left(
k\right)  y_{0}\left(  k,x\right)  \frac{\mathrm{d}k}{\pi}.
\end{align*}

\end{theorem}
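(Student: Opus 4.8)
The plan is to treat the full right scattering data $\{R,\kappa_n,c_n\}$ as a finite-rank (soliton) perturbation of the reflection-only data $\{R,\varnothing\}$ and to solve the Marchenko equation (\ref{Marchenko eq}) accordingly. I work in the $+$ case and drop the subscript. Split the symbol (\ref{Symbol}) as $\varphi=\varphi_0+\varphi_s$, where $\varphi_0(k,x)=R(k)e^{2\mathrm{i}kx}$ is the symbol attached to $\{R,\varnothing\}$ and $\varphi_s(k,x)=\sum_n(-\mathrm{i}c_n^2 e^{-2\kappa_n x})(k-\mathrm{i}\kappa_n)^{-1}$ is the soliton part, so that $I+\mathbb{H}(\varphi)=A+\mathbb{H}(\varphi_s)$ with $A:=I+\mathbb{H}(\varphi_0)$ positive definite (again by Theorem~8.2 of \cite{GruRybSIMA15}). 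Using $\mathbb{P}_-\bigl(f/(\cdot-\mathrm{i}\kappa_n)\bigr)=f(\mathrm{i}\kappa_n)/(\cdot-\mathrm{i}\kappa_n)$ one checks that $\mathbb{H}(\varphi_s)$ is the rank-$N$ operator $\mathbb{H}(\varphi_s)f=\sum_n c_n^2 e^{-2\kappa_n x}f(\mathrm{i}\kappa_n)\,g_n$, where $g_n(k):=\mathrm{i}/(k+\mathrm{i}\kappa_n)\in H^2$.

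Solving (\ref{Marchenko eq}) then reduces, by a Woodbury/Sherman--Morrison argument, to inverting $A$ together with an $N\times N$ linear system. Writing $u:=1+y$, $u_0:=1+y_0$ (so $\psi_0(x,\mathrm{i}\kappa_n)=e^{-\kappa_n x}u_0(\mathrm{i}\kappa_n)$) and $w_n:=A^{-1}g_n$, I obtain $y=y_0-\sum_n b_n w_n$, where $y_0=-A^{-1}\mathbb{H}(\varphi_0)1$ is the background solution and $\vec b=(b_n)$ solves $(\boldsymbol D+\boldsymbol M)\vec b=\vec u_0$ with $\boldsymbol D=\operatorname{diag}(c_n^{-2}e^{2\kappa_n x})$, $M_{nm}=w_m(\mathrm{i}\kappa_n)$ and $\vec u_0=(u_0(\mathrm{i}\kappa_n))$. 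Since the $b_n$ are real (the Jost solution is real at $k=\mathrm{i}\kappa_n$), substituting into (\ref{trace formula2}) gives $q=q_0+\tfrac{2}{\pi}\partial_x\sum_n b_n\int\operatorname{Re}w_n\,\mathrm dk$, the background term reproducing $q_0=-\tfrac2\pi\partial_x\int\operatorname{Re}y_0\,\mathrm dk$.

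Everything now hinges on two identities for $w_n=A^{-1}g_n$. The first uses the reproducing-kernel fact $\langle f,g_n\rangle=2\pi f(\mathrm{i}\kappa_n)$ for $f\in H^2$ (immediate from Cauchy's formula), which renders $M_{nm}=\tfrac1{2\pi}\langle A^{-1}g_m,g_n\rangle$ symmetric, together with the contour evaluation $\int\operatorname{Re}w\,\mathrm dk=\pi\operatorname{Im}\lim_{k\to\infty}kw(k)$ valid for $w\in H^2$ with $w\sim c/k$; combined with the self-adjointness of $A$ these should yield $\tfrac1\pi\int\operatorname{Re}w_n\,\mathrm dk=u_0(\mathrm{i}\kappa_n)$, collapsing the correction to $2\partial_x\,\vec u_0^{\,T}(\boldsymbol D+\boldsymbol M)^{-1}\vec u_0$. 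After conjugating by $\boldsymbol E:=\operatorname{diag}(e^{-\kappa_n x})$ and writing $\boldsymbol G:=\boldsymbol E(\boldsymbol D+\boldsymbol M)\boldsymbol E$, this is $2\partial_x\,\boldsymbol\Psi_0\boldsymbol G^{-1}\boldsymbol\Psi_0^T$. The second, genuinely harder identity is the one that puts $\boldsymbol G$ in closed form, namely
\[
e^{-(\kappa_n+\kappa_m)x}M_{nm}=\int_x^\infty \psi_0(s,\mathrm{i}\kappa_n)\psi_0(s,\mathrm{i}\kappa_m)\,\mathrm ds,
\]
so that $\boldsymbol G=\boldsymbol C^{-1}+\int_x^\infty\boldsymbol\Psi_0(s)^T\boldsymbol\Psi_0(s)\,\mathrm ds$. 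This is exactly where the overlap integrals of the background Jost solutions must emerge from the Hankel resolvent, and I expect it to be the main obstacle.

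To establish that overlap identity I would use one of two routes. The analytic route is a Parseval computation: under the Fourier isometry $H^2\cong L^2(0,\infty)$ the operator $\mathbb{H}(\varphi_0)$ becomes the Gelfand--Levitan--Marchenko integral operator, $g_n$ corresponds to $e^{-\kappa_n(\cdot)}$, and $(I+\mathbb{H}(\varphi_0))^{-1}e^{-\kappa_n(\cdot)}$ is carried to $\psi_0(\cdot,\mathrm{i}\kappa_n)=e^{-\kappa_n\cdot}+\int_\cdot^\infty K_0(\cdot,y)e^{-\kappa_n y}\mathrm dy$, turning $\langle A^{-1}g_m,g_n\rangle$ into the claimed overlap. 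The alternative, self-contained route is an ODE argument: since $\partial_x\varphi_0=2\mathrm{i}k\varphi_0$, one has $\partial_x A=\mathbb{H}(2\mathrm{i}k\varphi_0)$ and $\partial_x w_m=-A^{-1}(\partial_x A)w_m$, and a computation of $\langle(\partial_x A)w_m,w_n\rangle$ should give $\partial_x M_{nm}=(\kappa_n+\kappa_m)M_{nm}-u_0(\mathrm{i}\kappa_n)u_0(\mathrm{i}\kappa_m)$, whose unique solution remaining bounded (i.e.\ free of the growing homogeneous mode) as $x\to+\infty$ is precisely $e^{(\kappa_n+\kappa_m)x}\int_x^\infty\psi_0\psi_0$; this is a Wronskian identity in disguise, since each $\psi_0(\cdot,\mathrm{i}\kappa_n)$ solves $-\psi''+q_0\psi=-\kappa_n^2\psi$. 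Once $\boldsymbol G$ is identified, the correction assembles into (\ref{trace formula 4}), equivalently $q=q_0-2\partial_x^2\log\det\boldsymbol G$ via $\partial_x\boldsymbol G=-\boldsymbol\Psi_0^T\boldsymbol\Psi_0$ and $\boldsymbol\Psi_0\boldsymbol G^{-1}\boldsymbol\Psi_0^T=-\partial_x\log\det\boldsymbol G$. Finally, the three representations of $q_0$ follow by specializing Corollary~\ref{Corollary 1} and (\ref{trace formula2}) to the data $\{R,\varnothing\}$ (the bound-state sums drop out), rewriting the reflection integral as $\partial_x(PV)\int e^{\mathrm{i}kx}R\psi_0\,\mathrm dk/\pi$, and then, after $\psi_0=e^{\mathrm{i}kx}(1+y_0)$, regularizing the reflection-only piece through $\partial_x(e^{2\mathrm{i}kx}-1)/(2\mathrm{i}k)=e^{2\mathrm{i}kx}$ to secure absolute convergence from $\langle k\rangle R\in L^2$.
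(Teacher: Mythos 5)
Your proposal is correct in outline but follows a genuinely different route from the paper. The paper's proof is essentially two lines: it quotes the binary Darboux transformation $q=q_{0}-2\partial_{x}^{2}\log\det\bigl(\boldsymbol{C}^{-1}+\int_{x}^{\infty}\boldsymbol{\Psi}_{0}^{T}\boldsymbol{\Psi}_{0}\bigr)$ from \cite{RybSAM22} and converts the log-determinant into the quadratic form in (\ref{trace formula 4}) by the Jacobi differentiation formula together with $\partial_{x}\boldsymbol{G}=-\boldsymbol{\Psi}_{0}^{T}\boldsymbol{\Psi}_{0}$; the representations of $q_{0}$ are then read off from (\ref{trace formula2}) and (\ref{trace formula 3}) applied to the data $\left\{  R,\varnothing\right\}  $, exactly as you do at the end. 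You instead rederive the Darboux formula from scratch inside the Hankel framework: the rank-$N$ splitting of the symbol, the explicit action $\mathbb{H}(\varphi_{s})f=\sum_{n}c_{n}^{2}e^{-2\kappa_{n}x}f(\mathrm{i}\kappa_{n})g_{n}$ with $g_{n}=\mathrm{i}/(k+\mathrm{i}\kappa_{n})$, the reduction of (\ref{Marchenko eq}) to the $N\times N$ system $(\boldsymbol{D}+\boldsymbol{M})\vec{b}=\vec{u}_{0}$, and the bookkeeping (signs, the conjugation by $\boldsymbol{E}$, the identification $\boldsymbol{E}\boldsymbol{D}\boldsymbol{E}=\boldsymbol{C}^{-1}$) are all consistent with the target formula. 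What your route buys is a self-contained argument that makes visible where the Darboux determinant comes from; what it costs is that the two identities you yourself single out --- $\tfrac{1}{\pi}\int\operatorname{Re}w_{n}=u_{0}(\mathrm{i}\kappa_{n})$ and the overlap formula $e^{-(\kappa_{n}+\kappa_{m})x}M_{nm}=\int_{x}^{\infty}\psi_{0}(s,\mathrm{i}\kappa_{n})\psi_{0}(s,\mathrm{i}\kappa_{m})\,\mathrm{d}s$ --- carry all the real work and are only sketched (``should yield'', ``I expect''). Both are standard and completable; the second is cleanest by your ODE route, where $\partial_{x}M_{nm}=(\kappa_{n}+\kappa_{m})M_{nm}-u_{0}(\mathrm{i}\kappa_{n})u_{0}(\mathrm{i}\kappa_{m})$ plus decay as $x\rightarrow+\infty$ pins down the solution, and that Wronskian computation is precisely what underlies the relation $\partial_{x}\boldsymbol{G}=-\boldsymbol{\Psi}_{0}^{T}\boldsymbol{\Psi}_{0}$ that the paper's Jacobi step also needs. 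So the proposal is a correct plan rather than a finished proof; to match the paper's economy one simply cites the binary Darboux transformation and keeps only the Jacobi step.
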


\begin{proof}
We merely combine the formula (\ref{trace formula2}) from Theorem
\ref{thm on trace} and the version of the binary Darboux transformation from
our \cite{RybSAM22}%
\[
q\left(  x\right)  =q_{0}\left(  x\right)  -2\partial_{x}^{2}\log\det\left(
\boldsymbol{C}^{-1}+\int_{x}^{\infty}\boldsymbol{\Psi}_{0}^{T}\left(
s\right)  \boldsymbol{\Psi}_{0}\left(  s\right)  \right)  ,
\]
where $q_{0}\left(  x\right)  $ is the potential corresponding to $\left\{
R,\varnothing\right\}  $. Indeed, by the Jacobi formula on differentiation of
determinants one has%
\begin{align*}
&  \partial_{x}\log\det\left(  \boldsymbol{C}^{-1}+\int_{x}^{\infty
}\boldsymbol{\Psi}_{0}^{T}\left(  s\right)  \boldsymbol{\Psi}_{0}\left(
s\right)  \right) \\
&  =-\boldsymbol{\Psi}_{0}\left(  x\right)  \left(  \boldsymbol{C}^{-1}%
+\int_{x}^{\infty}\boldsymbol{\Psi}_{0}\left(  s\right)  ^{T}\boldsymbol{\Psi
}_{0}\left(  s\right)  \mathrm{d}s\right)  ^{-1}\boldsymbol{\Psi}_{0}\left(
x\right)  ^{T}.
\end{align*}

\end{proof}

We will demonstrate below that the trace formula (\ref{trace formula 4}) is
convenient for limiting arguments. Of course, a similar formula holds for the
left scattering data.

\section{Trace formula and KdV solutions\label{sect of kdv solutions}}

In this section we show that our trace formulas yield new representations for
solutions to the KdV equation with short-range initial data. Note that the
condition $q\left(  x\right)  \in L_{1}^{1}$ alone does not guaranty that
$q\left(  x,t\right)  \in L_{1}^{1}$ for $t>0$ (see Corollary
\ref{Corol on preservation}) and therefore (\ref{trace formula2}) does not
apply. We cannot even be sure that (\ref{cond on y}) holds for $q\left(
x,t\right)  .$ To overcome the problems we employ some limiting arguments.
Through the rest of the paper we use the following convenient notation%
\[
\xi_{x,t}\left(  k\right)  :=\exp\mathrm{i}\left(  8k^{3}t+2kx\right)  .
\]
While highly oscillatory on the real line, this function has a rapid decay
along $\mathbb{R}+\mathrm{i}a$ for any $a>0$.

\begin{theorem}
\label{main thm}If $q\left(  x\right)  \in L_{1}^{1}$ and $\left\{
R,\kappa_{j},c_{j}\right\}  $ are the associated right scattering data then
the solution\footnote{The general theory guaranties well-posedness at least in
the $L^{2}$- base Sobolev space $H^{-\varepsilon}$ with any index
$0<\varepsilon\leq1$ (see, e.g. \cite{KillipWP2019}).} $q\left(  x,t\right)  $
to the Cauchy problem for the KdV equation (\ref{KdV}) with initial data
$q\left(  x\right)  $ can be represented by%
\begin{align}
q\left(  x,t\right)   &  =q_{0}\left(  x,t\right) \label{KdV solution}\\
&  +2\partial_{x}\boldsymbol{\Psi}_{0}\left(  x,t\right)  \left(
\boldsymbol{C}\left(  t\right)  ^{-1}+\int_{x}^{\infty}\boldsymbol{\Psi}%
_{0}\left(  s,t\right)  ^{T}\boldsymbol{\Psi}_{0}\left(  s,t\right)
\mathrm{d}s\right)  ^{-1}\boldsymbol{\Psi}_{0}\left(  x,t\right)
^{T},\nonumber
\end{align}
where%
\[
\boldsymbol{\Psi}_{0}\left(  x,t\right)  =\left(  \psi_{0}\left(
x,t,\mathrm{i}\kappa_{j}\right)  \right)  ,\ \ \ \boldsymbol{C}\left(
t\right)  =\left(  c_{j}\exp8\kappa_{j}^{3}t\right)  ,
\]%
\[
\psi_{0}\left(  x,t,k\right)  =e^{\mathrm{i}kx}\left[  1+y_{0}\left(
k,x,t\right)  \right]  ,
\]
$y_{0}\left(  \cdot,x,t\right)  $ is the $H^{2}$ solution of the singular
integral equation%
\[
y+\mathbb{H}(R\xi_{x,t})y=-\mathbb{H}(R\xi_{x,t})1,
\]%
\begin{equation}
q_{0}\left(  x,t\right)  =\partial_{x}\left\{  \left(  PV\right)  \int
R\left(  k\right)  \xi_{x,t}\left(  k\right)  \frac{\mathrm{d}k}{\pi}+\int
R\left(  k\right)  \xi_{x,t}\left(  k\right)  y_{0}\left(  k,x,t\right)
\frac{\mathrm{d}k}{\pi}\right\}  . \label{q0}%
\end{equation}

\end{theorem}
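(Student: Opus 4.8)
The plan is to read \eqref{KdV solution} off Theorem \ref{thm on trace formula 4} applied to the time-evolved scattering data, the only genuine difficulty being that $q(\cdot,t)$ leaves $L_1^1$ for $t>0$ (Corollary \ref{Corol on preservation}), so Theorem \ref{thm on trace formula 4} cannot be invoked at time $t$ directly. First I would recall the classical isospectral evolution of the scattering data under the KdV flow: the $\kappa_j$ are time-independent, $c_j^2\mapsto c_j^2 e^{8\kappa_j^3 t}$, and $R(k)\mapsto R(k)e^{8\mathrm{i}k^3 t}$. Since the Marchenko symbol \eqref{Symbol} couples the reflection part with $e^{2\mathrm{i}kx}$, this evolution turns it into $R(k)e^{8\mathrm{i}k^3t}e^{2\mathrm{i}kx}=R(k)\xi_{x,t}(k)$, exactly the symbol governing $y_0(\cdot,x,t)$ in the statement. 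The soliton-addition term and the matrix expression in \eqref{KdV solution} are then produced, verbatim as in the proof of Theorem \ref{thm on trace formula 4}, by the time-dependent binary Darboux transformation of \cite{RybSAM22} together with the Jacobi determinant-differentiation identity; this part is purely algebraic and carries no analytic difficulty beyond inserting the factors $e^{8\kappa_j^3 t}$. Thus the whole theorem reduces to establishing the radiation representation \eqref{q0} for the KdV solution $q_0(x,t)$ whose data is $\{R,\varnothing\}$.

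To establish \eqref{q0} I would argue by approximation at the level of the scattering data, the natural category for the Hankel framework and precisely what Proposition \ref{Prop on R} supplies. Approximate the short-range background potential $q_0(\cdot,0)$ by Schwartz potentials $q_{0,m}$ without bound states, $q_{0,m}\to q_0(\cdot,0)$ in $L_1^1$; by Proposition \ref{Prop on R} the reflection coefficients satisfy $R_m\to R$ in $L^2$. For Schwartz initial data the KdV solution $q_{0,m}(\cdot,t)$ stays Schwartz, hence in $L_1^1$, for every $t$, and its reflection coefficient is $R_m(k)e^{8\mathrm{i}k^3t}$; therefore Theorem \ref{thm on trace formula 4} (with $N=0$) applies to $q_{0,m}(\cdot,t)$ and gives \eqref{q0} with $R,y_0$ replaced by $R_m,y_{0,m}$, where $y_{0,m}(\cdot,x,t)=-[I+\mathbb{H}(R_m\xi_{x,t})]^{-1}\mathbb{H}(R_m\xi_{x,t})1$ by \eqref{sltn to March eq}. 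It then remains to pass to the limit $m\to\infty$ on both sides, identifying the limit of $q_{0,m}(x,t)$ with $q_0(x,t)$ through continuous dependence of the IST solution on the data.

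The limit on the right is controlled as follows. Because $\xi_{x,t}$ is unimodular on $\mathbb{R}$ one has $\|R_m\xi_{x,t}-R\xi_{x,t}\|_{L^2}=\|R_m-R\|_{L^2}\to0$, and the uniform bound $|R_m|\le1$ gives, through dominated convergence along subsequences, strong convergence $\mathbb{H}(R_m\xi_{x,t})\to\mathbb{H}(R\xi_{x,t})$ together with $\mathbb{H}(R_m\xi_{x,t})1=\mathbb{P}_+\overline{R_m\xi_{x,t}}\to\mathbb{P}_+\overline{R\xi_{x,t}}$ in $H^2$. The positivity of $I+\mathbb{H}$ from \cite[Theorem 8.2]{GruRybSIMA15} furnishes the uniform resolvent bound $\|[I+\mathbb{H}(R_m\xi_{x,t})]^{-1}\|\le1$, and the resolvent identity then yields $y_{0,m}(\cdot,x,t)\to y_0(\cdot,x,t)$ in $H^2$ for each fixed $(x,t)$. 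Consequently $\int R_m\xi_{x,t}\,y_{0,m}\to\int R\xi_{x,t}\,y_0$ as an $L^2\times L^2$ pairing, while the reflection-only integral in \eqref{q0} is best handled through its principal-value--free form $\partial_x^2\int\frac{\xi_{x,t}-1}{2\mathrm{i}k}R\,\frac{\mathrm{d}k}{\pi}$ from Theorem \ref{thm on trace formula 4}, in which $(\xi_{x,t}-1)/(2\mathrm{i}k)\in L^2$, so that only $R\in L^2$ is needed and the convergence $R_m\to R$ transfers at once; the equivalence with the $(PV)$-form in \eqref{q0} follows from Lemma \ref{lemma on PV} as in Corollary \ref{Corollary 1}.

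The main obstacle is the interchange of $\lim_m$ with the $x$-derivatives in \eqref{q0}: since $q_{0,m}(\cdot,t)$ does not converge in $L_1^1$, the short-range a priori estimates \eqref{est for y}--\eqref{est for T-Tn} are not available uniformly in $m$ at $t>0$, so pointwise $L^2$-convergence of the integrands does not by itself license differentiating under the limit. I expect to resolve this by upgrading the convergence of the pre-differentiated quantities to be locally uniform in $(x,t)$: the resolvent $[I+\mathbb{H}(R\xi_{x,t})]^{-1}$ and hence $y_0$ depend continuously on $(x,t)$ because $\xi_{x,t}$ does so in $L^2_{\mathrm{loc}}$ and decays rapidly on $\mathbb{R}+\mathrm{i}a$, $a>0$; this analytic dependence off the real axis, combined with the uniform resolvent bound, gives equicontinuity in $x$ of the family $\{\int\operatorname{Re}y_{0,m}\,\mathrm{d}k\}_m$ and of its $x$-derivative, exactly what is required to exchange the limit with $\partial_x$. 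For $t>0$ the cubic phase of $\xi_{x,t}$ additionally renders the relevant oscillatory integrals absolutely convergent, the very mechanism exploited in Section \ref{sect on analytic smoothing}, so the delicate endpoint is only $t=0$, where no limit is needed since $q=q(\cdot,0)\in L_1^1$ and Theorem \ref{thm on trace formula 4} applies outright.
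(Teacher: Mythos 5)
Your overall strategy---approximate by Schwartz data, apply the short-range trace formula at each stage, pass to the limit through $L^{2}$-convergence of the reflection coefficients and strong convergence of the Hankel operators, and reinstate the bound states by the binary Darboux transformation---is exactly the paper's. But two steps you take for granted are where the paper has to do real work. First, you ``approximate $q_{0}(\cdot,0)$ by Schwartz potentials without bound states''; it is not clear that an arbitrary Schwartz approximation of $q_{0}$ in $L_{1}^{1}$ avoids creating (small) negative eigenvalues, and the paper secures this by writing $q_{0}=\partial_{x}r+r^{2}$ (Miura, using positivity of $\mathbb{L}_{q_{0}}$, which holds because the data is $\{R,\varnothing\}$) and setting $q_{0,n}=r_{n}^{2}+\partial_{x}r_{n}$ for Schwartz $r_{n}$, so that each $\mathbb{L}_{q_{0,n}}$ is automatically positive. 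Without some such device your approximating sequence may not exist in the form you need. Second, identifying $\lim_{m}q_{0,m}(x,t)$ with the KdV solution $q_{0}(x,t)$ is not merely ``continuous dependence of the IST solution on the data'': the paper invokes well-posedness of KdV in $H^{-\varepsilon}$ (Killip--Visan) together with $L_{1}^{1}\subset H^{-\varepsilon}$, so that $q_{0,m}(\cdot,0)\rightarrow q_{0}(\cdot,0)$ in $L_{1}^{1}$ forces $q_{0,m}(\cdot,t)\rightarrow q_{0}(\cdot,t)$ in $H^{-\varepsilon}$; this PDE input is indispensable and must be stated, since the IST formulas alone do not tell you that their limit solves \eqref{KdV}.

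Two smaller points. Your uniform resolvent bound $\left\Vert \left[ I+\mathbb{H}(R_{m}\xi_{x,t})\right] ^{-1}\right\Vert \leq1$ does not follow from positive definiteness alone: $\mathbb{H}$ is self-adjoint and may have spectrum near $-1$, in which case the inverse has large norm; the paper only uses positive definiteness of all the operators involved to pass to the limit of the inverses, and any quantitative uniform bound would need separate justification. Finally, the interchange of $\lim_{m}$ with $\partial_{x}$ that you propose to handle by an equicontinuity upgrade is dealt with in the paper simply by taking the limits in the weak-$\ast$ (distributional) sense---the derivatives in \eqref{q0} are distributional, so weak-$\ast$ convergence of the pre-differentiated integrals suffices, and your locally uniform convergence argument, while plausible for $t>0$, is not needed for the statement as formulated.
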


\begin{proof}
For Schwarz $q\left(  x\right)  $ there is nothing to prove as $q\left(
x,t\right)  $ is also a Schwarz function. Since KdV is well-posed in any
Sobolev space $H^{-\varepsilon}$ with $0<\varepsilon\leq1$ (see e.g.
\cite{KillipWP2019}) and $L_{1}^{1}\subset H^{-\varepsilon}$, for any sequence
of (real) Schwarz functions $q_{n}\left(  x\right)  $ approximating $q\left(
x\right)  $ in $L_{1}^{1}$ the sequence of $q_{n}\left(  x,t\right)  $
converges in $H^{-\varepsilon}$ to $q\left(  x,t\right)  $, the solution to
(\ref{KdV}) with the initial profile $q\left(  x\right)  $. Thus, we only need
to compute $\lim_{n\rightarrow\infty}q_{n}\left(  x,t\right)  $. Note that
convergence of norming constants is somewhat inconvenient to deal with but
results of our recent \cite{RybSAM22} offers a simple detour of this
circumstance. Take the scattering data $\left\{  R,\varnothing\right\}  $
(i.e. no bound states) and construct by (\ref{trace formula2}) the
corresponding potential%
\[
q_{0}\left(  x\right)  =-2\partial_{x}\int\operatorname{Re}y_{0}\left(
k,x\right)  \mathrm{d}k.
\]
Since by construction $\mathbb{L}_{q_{0}}$ is positive, $q_{0}$ is the Miura
transformation%
\[
q_{0}\left(  x\right)  =\partial_{x}r\left(  x\right)  +r\left(  x\right)
^{2}%
\]
of some real $r\in L_{\operatorname*{loc}}^{2}$ \cite{KapPerryTopalov2005}.
Choose a sequence $\left(  r_{n}\right)  $ of Schwarz function such that the
sequence $q_{0,n}=r_{n}^{2}+\partial_{x}r_{n}$ approximates $q_{0}$ in
$L_{1}^{1}$. As is well-known, each $R_{n}\left(  k\right)  $ is also Schwarz
and so is $q_{0,n}\left(  x,t\right)  $ for $t\geq0$. Therefore, by
(\ref{trace formula 3}) and recalling that $\psi_{n}=e^{\mathrm{i}kx}\left(
1+y_{n}\right)  $ we have%
\begin{align*}
q_{0,n}\left(  x,t\right)   &  =\partial_{x}\int R_{n}\left(  k\right)
\xi_{x,t}\left(  k\right)  \left[  1+y_{n}\left(  k,x,t\right)  \right]
\frac{\mathrm{d}k}{\pi}\\
&  =\partial_{x}^{2}\int\frac{\xi_{x,t}\left(  k\right)  -1}{2\mathrm{i}%
k}R_{n}\left(  k\right)  \frac{\mathrm{d}k}{\pi}+\partial_{x}\int R_{n}\left(
k\right)  \xi_{x,t}\left(  k\right)  y_{n}\left(  k,x,t\right)  \frac
{\mathrm{d}k}{\pi}\\
&  =:q_{0,n}^{\left(  1\right)  }\left(  x,t\right)  +q_{0,n}^{\left(
2\right)  }\left(  x,t\right)
\end{align*}
Here we have used a well-known regularization of the Fourier integral. This
representation is convenient for passing to the limit as $n\rightarrow\infty$.
By Proposition \ref{Prop on R} the sequence of reflection coefficients $R_{n}$
converges in $L^{2}$ to $R$. Paring this sequence, if needed, we may assume
that $R_{n}\rightarrow R$ a.e. Clearly $R_{n}\xi_{x,t}\rightarrow R\xi_{x,t}$
a.e. too. But then, as is well-known (can also be easily shown), the
corresponding sequence of Hankel operators $\mathbb{H}(R_{n}\xi_{x,t})$
converges to $\mathbb{H}(R\xi_{x,t})$ in the strong operator topology. Since
(all) $I+\mathbb{H}(R_{n}\xi_{x,t})$ and $I+\mathbb{H}(R\xi_{x,t})$ are
positive definite \cite[Theorem 8.2]{GruRybSIMA15} for all $x,t$ we conclude
that in $H^{2}$%
\begin{align*}
y_{n}  &  =-\left(  I+\mathbb{H}(R_{n}\xi_{x,t})\right)  ^{-1}\mathbb{H}%
(R_{n}\xi_{x,t})1\\
&  \rightarrow-\left(  I+\mathbb{H}(R\xi_{x,t})\right)  ^{-1}\mathbb{H}%
(R\xi_{x,t})1=:y_{0}\ \ \ n\rightarrow\infty,
\end{align*}
where $y_{0}\left(  \cdot,x,t\right)  \in H^{2}$. Therefore, for all $x,t$%
\[
\int\frac{\xi_{x,t}\left(  k\right)  -1}{2\mathrm{i}k}R_{n}\left(  k,t\right)
\frac{\mathrm{d}k}{\pi}\rightarrow\int\frac{\xi_{x,t}\left(  k\right)
-1}{2\mathrm{i}k}R\left(  k,t\right)  \frac{\mathrm{d}k}{\pi}%
\]
and%
\[
\int R\left(  k\right)  \xi_{x,t}\left(  k\right)  y_{n}\left(  k,x,t\right)
\frac{\mathrm{d}k}{\pi}\rightarrow\int R\left(  k\right)  \xi_{x,t}\left(
k\right)  y_{0}\left(  k,x,t\right)  \frac{\mathrm{d}k}{\pi}.
\]
Thus we conclude that for each $t\geq0$%
\[
w^{\ast}-\lim_{n\rightarrow\infty}q_{0,n}^{\left(  1\right)  }\left(
x,t\right)  =\partial_{x}^{2}\int\frac{\xi_{x,t}\left(  k\right)
-1}{2\mathrm{i}k}R\left(  k\right)  \frac{\mathrm{d}k}{\pi}%
\]%
\[
w^{\ast}-\lim_{n\rightarrow\infty}q_{0,n}^{\left(  2\right)  }\left(
x,t\right)  =\partial_{x}\int R_{n}\left(  k\right)  \xi_{x,t}\left(
k\right)  y_{n}\left(  k,x,t\right)  \frac{\mathrm{d}k}{\pi}%
\]
and (\ref{q0}) follows. Performing the binary Darboux transformation
\cite{RybSAM22} we arrive at (\ref{KdV solution}).
\end{proof}

\begin{remark}
Performing in (\ref{KdV solution}) the inverse binary Darboux transformation
\cite{RybSAM22}, we can conclude that we also have%
\begin{equation}
q\left(  x,t\right)  =-\frac{2}{\pi}\partial_{x}\int\operatorname{Re}y\left(
k,x,t\right)  \mathrm{d}k \label{q(x,t)}%
\end{equation}
but cannot claim that this integral is absolutely convergent as it was in
(\ref{trace formula2}). This of course would be true if the asymptotic
(\ref{asympt}) held for $q\left(  x,t\right)  $. The problem with
(\ref{asympt}) is that the error in (\ref{asympt}) depends on $\left\Vert
q\left(  \cdot,t\right)  \right\Vert _{L^{1}}$, which need not be finite. We
however conjecture that the integral in (\ref{q(x,t)}) is indeed absolutely
convergent but we can no longer use tools and estimates from the short-range
scattering theory.
\end{remark}

\begin{corollary}
The trace Deift-Trubowitz trace formula (\ref{DT trace}) holds for $t>0$ (not
only for $t=0$).
\end{corollary}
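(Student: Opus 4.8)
The plan is to obtain (\ref{DT trace}) for $q(x,t)$ by nonlinearizing the linear trace representation already available for the KdV flow and then reinstating the bound states exactly. By the Remark following Theorem \ref{main thm} we have at our disposal the representation (\ref{q(x,t)}), namely $q(x,t)=-\tfrac{2}{\pi}\partial_{x}\int\operatorname{Re}y(k,x,t)\,\mathrm{d}k$, where $y(\cdot,x,t)$ is the $H^{2}$ solution of the Marchenko equation built from the time-evolved right scattering data $\{R(k)e^{8\mathrm{i}k^{3}t},\kappa_{n},c_{n}(t)\}$ with $c_{n}(t)^{2}=c_{n}^{2}e^{8\kappa_{n}^{3}t}$. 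The Appendix shows that, for short-range data, a purely algebraic manipulation resting on the Marchenko equation (\ref{Marchenko eq}) and the basic scattering relation (\ref{R basic scatt identity}) turns this linear formula into the quadratic Deift--Trubowitz form (\ref{DT trace}). Since those two relations are functional identities among the Jost solution and the scattering coefficients of the \emph{potential} $q(x,t)$ — holding whenever the Jost solutions exist, irrespective of short-rangeness — the same manipulation should reproduce (\ref{DT trace}) with the time-evolved data.

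First I would reduce the limiting analysis to the reflectionless background, exactly as in the proof of Theorem \ref{main thm}. Splitting off the $N$ bound states by the binary Darboux transformation of \cite{RybSAM22}, the soliton contribution $-4\sum_{n}\kappa_{n}c_{n}(t)^{2}\psi_{+}(x,t,\mathrm{i}\kappa_{n})^{2}$ of (\ref{DT trace}) is produced by the finite-rank term in (\ref{KdV solution}); because this is an exact, purely algebraic identity in the finitely many $(\kappa_{n},c_{n}(t))$, it sidesteps the delicate convergence of norming constants that obstructs a naive approximation. The radiation contribution $\tfrac{2\mathrm{i}}{\pi}(PV)\int R(k)e^{8\mathrm{i}k^{3}t}\psi_{+}(x,t,k)^{2}k\,\mathrm{d}k$ is then handled by the same approximation scheme used for $q_{0}(x,t)$: approximate the Miura potential $r$ of $q_{0}$ by Schwarz $r_{n}$, for which (\ref{DT trace}) holds classically, and pass to the limit using $R_{n}\to R$ in $L^{2}$ (Proposition \ref{Prop on R}), $R_{n}\xi_{x,t}\to R\xi_{x,t}$ a.e., strong convergence of the Hankel operators $\mathbb{H}(R_{n}\xi_{x,t})$, and hence $y_{n}\to y_{0}$ in $H^{2}$ — precisely the convergences already established in Theorem \ref{main thm}.

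The main obstacle is that $q(x,t)$ need not be short-range, so the integral in (\ref{DT trace}) is only conditionally convergent: the $t=0$ argument relies on the asymptotics (\ref{asympt}), whose error depends on $\Vert q(\cdot,t)\Vert_{L^{1}}$, which may be infinite. Consequently the $(PV)$ integral must be defined through the regularized Fourier integral (as in the representation of $q_{0}(x,t)$ in Theorem \ref{main thm}), splitting off the non-integrable $1/k$ part before passing to the limit, rather than through an absolute-convergence bound. Once this regularization is in place, the identity from the Appendix is preserved in the limit because it is assembled solely from (\ref{Marchenko eq}) and (\ref{R basic scatt identity}), both of which survive the $H^{2}$ and strong-operator convergence; combined with the exact Darboux treatment of the solitons, this yields (\ref{DT trace}) for all $t>0$.
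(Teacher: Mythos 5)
Your proposal is correct in outline and runs on the same engine as the paper's proof --- the Schwarz approximation scheme and the convergences ($R_{n}\rightarrow R$ in $L^{2}$, strong convergence of $\mathbb{H}(R_{n}\xi_{x,t})$, $y_{n}\rightarrow y_{0}$ in $H^{2}$) already established in Theorem \ref{main thm} --- but it inserts two extra layers that the paper avoids. The paper simply takes the approximating scattering data to be $\left\{ R_{n},\kappa_{j},c_{j}\right\}$ with the bound-state data held \emph{fixed}, observes that (\ref{DT trace}) holds classically for each $q_{n}(x,t)$, and passes to the limit; the single point it isolates is the soliton sum, which converges because $y_{n}(\cdot,x,t)\rightarrow y(\cdot,x,t)$ in $H^{2}$ forces uniform convergence on compact subsets of the upper half-plane and hence $\psi_{n}(x,t,\mathrm{i}\kappa_{j})\rightarrow\psi(x,t,\mathrm{i}\kappa_{j})$. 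You instead reinstate the bound states through the binary Darboux transformation and nonlinearize the linear trace representation via the Appendix computation. That route is workable but carries obligations the paper's choice makes evaporate: (i) you assert, rather than verify, that the finite-rank term in (\ref{KdV solution}) differentiates to the quadratic soliton sum $-4\sum_{n}\kappa_{n}c_{n}(t)^{2}\psi_{+}(x,t,\mathrm{i}\kappa_{n})^{2}$ of (\ref{DT trace}), which requires the Jacobi-formula/Darboux algebra to be carried out; and (ii) the Appendix manipulation leans on $kR(k)\in L^{2}$ (the second Zakharov--Faddeev trace formula, hence $q\in L^{1}\cap L^{2}$), an assumption the corollary does not make, so your nonlinearization step needs the same principal-value regularization care you flag for the radiation integral. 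What your version buys is a cleaner separation of solitons from radiation and an explicit mechanism for the quadratic dependence on $\psi_{+}$; what the paper's version buys is brevity, with the ``$H^{2}$ convergence implies uniform convergence on compacts'' observation doing all the work for the bound-state term.
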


\begin{proof}
Indeed, the approximating sequence $q_{n}\left(  x,t\right)  $ that
corresponds to the sequence $\left\{  R_{n},\kappa_{j},c_{j}\right\}  $ where
$R_{n}$ is the same as constructed in the proof of Theorem \ref{main thm} will
do the job. The only question is why the first term in (\ref{DT trace}) holds
for $t>0$. This easily follows from our arguments. Indeed, since $y_{n}\left(
\cdot,x,t\right)  \rightarrow y\left(  \cdot,x,t\right)  $ in $H^{2}$ we also
have uniform convergence for $\operatorname{Im}k>0$ on compacts. Therefore,
$\psi_{n}\left(  \mathrm{i}\kappa_{j},x,t\right)  \rightarrow\psi\left(
\mathrm{i}\kappa_{j},x,t\right)  $ for all $x,t$.
\end{proof}

\begin{remark}
Extension of the Deift-Trubowitz trace formula (\ref{DT trace}) to KdV
solution would be a hard problem back in the 70s as the breakthrough in the
understanding of wellposedness in the $L^{2}$ based Sobolev spaces with
negative indexes only occurred after the seminal 1993 Bourgain paper
\cite{Bourgain93}, where wellposedness was proven in $L^{2}$. With no
well-posedness at hand we cannot use limiting arguments even if $q\left(
x\right)  \in L_{1}^{1}\cap L^{2}$.
\end{remark}

\section{How KdV trades decay for smoothness\label{sect on analytic smoothing}%
}

The goal of this section is to show how the results of the previous section
could be useful in understanding the phenomenon of dispersive smoothing (aka
gain of regularity).

\begin{theorem}
\label{Thm on decay}If $q\left(  x\right)  \in L_{1}^{1}\cap L^{2}$ then
$q\left(  x,t\right)  \in L_{\operatorname*{loc}}^{\infty}\cap L^{2}$ for
$t>0$.
\end{theorem}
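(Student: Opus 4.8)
The plan is to prove $q(x,t)\in L^2$ for each $t>0$ using the conservation of the $L^2$ norm under the KdV flow, and then to establish the $L^\infty_{\mathrm{loc}}$ regularity using the representation formula \eqref{KdV solution} together with the fact that the integrand becomes rapidly decaying once $t>0$. Let me sketch how I would organize this.

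For the $L^2$ part I would invoke global well-posedness of (\ref{KdV}) in $L^2$ (Bourgain \cite{Bourgain93}, as recalled above) together with conservation of the $L^2$ norm along the KdV flow. Since $q\in L_1^1\cap L^2\subset L^2$, this already gives $q(\cdot,t)\in L^2$ with $\|q(\cdot,t)\|=\|q\|$ for every $t\geq0$. The same fact can be read off the second Zakharov--Faddeev formula (\ref{ZF trace 2}): under the KdV flow the eigenvalues $\kappa_n$ are invariant and $|R(k,t)|=|R(k)|$, so the right-hand side of (\ref{ZF trace 2}) is time independent and finite, which pins down $\|q(\cdot,t)\|^2$.

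The heart of the matter is $q(\cdot,t)\in L_{\operatorname*{loc}}^{\infty}$, and here I would start from (\ref{KdV solution}). The solitonic summand is a finite-rank binary Darboux correction built from the entire functions $\psi_0(x,t,\mathrm{i}\kappa_j)$ and a smooth invertible matrix, hence lies in $C^\infty$ and is harmless; everything reduces to the reflectionless-background term $q_0(x,t)$ of (\ref{q0}). Using $\partial_x\xi_{x,t}=2\mathrm{i}k\,\xi_{x,t}$ and carrying the $x$-derivative inside, the task becomes the bounding, locally in $x$ and for each fixed $t>0$, of the oscillatory integrals $\int kR(k)\xi_{x,t}(k)\,\mathrm{d}k$ and $\int R(k)\xi_{x,t}(k)y_0(k,x,t)\,\mathrm{d}k$. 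The smoothing is produced entirely by the cubic phase in $\xi_{x,t}$: with $\phi(k)=8tk^3+2kx$ one has $\phi''(k)=48tk$ and $\phi'''(k)=48t>0$, so away from the two simple stationary points $k=\pm\sqrt{|x|/(12t)}$ one integrates by parts, while near them one applies a van der Corput estimate, the hypothesis $q\in L_1^1\cap L^2$ entering through $kR\in L^2$ (established in Section \ref{Sect Overview}) and through the continuity and decay of $R$ guaranteed by the first moment of $q$. This is exactly the mechanism by which the decay of the initial data is traded for local regularity at positive times; note that, because the representation is valid for every fixed $t$, one obtains the conclusion for all $t>0$ rather than merely almost every $t$ as a Kato local smoothing estimate would give.

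The step I expect to be the main obstacle is precisely this pointwise-in-$x$ control of the oscillatory integrals at a fixed time. The crude bounds fail: on $\mathbb{R}$ one has $|\xi_{x,t}|=1$, so Cauchy--Schwarz returns only the (conserved, but here useless) $L^2$ information, and the associated Airy kernel $\int e^{\mathrm{i}(8tk^3+2kx)}\,\mathrm{d}k$ lies in no $L^p$ with $p\leq4$, so Young's inequality is unavailable as well. One is therefore forced to exploit the cancellation of the cubic oscillation quantitatively and to combine it with the modest gain $kR\in L^2$, paying attention to the fact that the amplitude is merely $L^2$ rather than smooth or of bounded variation. The second integral brings the additional complication that $y_0(\cdot,x,t)=-\left(I+\mathbb{H}(R\xi_{x,t})\right)^{-1}\mathbb{H}(R\xi_{x,t})1$ depends on $(x,t)$ through the Hankel symbol, so one must check that differentiating it keeps the resulting integrals within the same oscillatory class before the van der Corput machinery can be applied. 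Once these estimates are secured, local boundedness of $q_0(\cdot,t)$, and hence of $q(\cdot,t)$, follows for every $t>0$.
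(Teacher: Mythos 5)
Your reduction to the reflectionless background term $q_{0}(x,t)$ of (\ref{q0}) and your treatment of the $L^{2}$ part via norm conservation are both consistent with the paper, and your handling of $\int kR(k)\xi_{x,t}(k)y_{0}(k,x,t)\,\mathrm{d}k$ by Cauchy--Schwarz (using $kR\in L^{2}$ and $y_{0}\in H^{2}$) is essentially what the paper does for that piece. But the step you yourself flag as ``the main obstacle'' --- the pointwise-in-$x$ control of $\partial_{x}(PV)\int R(k)\xi_{x,t}(k)\,\mathrm{d}k$ --- is a genuine gap, and the tool you propose for it does not close it. Van der Corput and stationary-phase bounds require an amplitude of bounded variation (or at least $C^{1}$); with an amplitude known only to satisfy $kR\in L^{2}$ the oscillatory integral $\int 2\mathrm{i}kR(k)e^{\mathrm{i}(8tk^{3}+2kx)}\,\mathrm{d}k$ is merely an $L^{2}_{x}$ object (the Airy evolution of an $L^{2}$ function evaluated at a point) and need not converge, let alone be locally bounded, for a fixed $t$. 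Cauchy--Schwarz is unavailable here because there is no second $L^{2}$ factor, and $kR\notin L^{1}$ in general. So the cancellation you want to ``exploit quantitatively'' has no handle on a rough amplitude, and the remaining term $q_{3}$ with $\partial_{x}y_{0}$ is likewise left unresolved: boundedness of $\partial_{x}\mathbb{H}=\mathbb{H}(2\mathrm{i}kR\xi_{x,t})$ does not follow from $kR\in L^{2}$ alone.

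The missing idea is the structural decomposition of the reflection coefficient from \cite{GruRybBLMS20}, $R=\phi_{1}+\phi_{2}+A$ with $\phi_{1}=\frac{T_{0}(k)}{2\mathrm{i}k}\widehat{q}(k)$, $\phi_{2}=\frac{T_{0}(k)}{(2\mathrm{i}k)^{2}}\widehat{p}(k)$ and $A\in H^{\infty}$, which is what makes the oscillatory integrals tractable. The analytic piece $A$ is handled by deforming the contour to $\mathbb{R}+\mathrm{i}a$, where $\xi_{x,t}$ decays rapidly; the piece $\phi_{2}$ is in $L^{1}$ (via $\widehat{p}\in L^{2}$, using $q\in L_{1}^{1}\cap L^{2}$), so its integral converges absolutely; and for $\phi_{1}$ one applies a Cauchy--Green regularization and then Fubini, which moves the rough factor $q(s)$ \emph{outside} the oscillatory integral and leaves the explicit Airy-type kernel $I(s-x,t)=\int_{\mathbb{R}+\mathrm{i}}\xi_{x-s,t}(k)T_{0}(k)\,\mathrm{d}k$, to which steepest descent applies with a smooth amplitude and yields $I(s-x,t)=O(|s|^{-1/4})$; pairing this against $sq(s)\in L^{1}$ gives the local bound. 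The same decomposition is what shows $2\mathrm{i}k\phi_{1},2\mathrm{i}k\phi_{2}\in L^{\infty}$ and hence that $\partial_{x}\mathbb{H}$ is bounded, which is needed to conclude $\partial_{x}y_{0}\in H^{2}$ and control $q_{3}$. Without importing this (or an equivalent) structure of $R$ beyond $kR\in L^{2}$, your argument cannot be completed as proposed.
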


\begin{proof}
We first note that we may assume that the negative spectrum is absent. I.e.
$\mathbb{L}_{q}$ is positive. Split our initial profile as%
\[
q=q_{-}+q_{+},\ \ \ q_{\pm}:=\left.  q\right\vert _{\mathbb{R}_{\pm}}.
\]
We may assume that $\mathbb{L}_{q_{+}}$ is positive as possible appearance of
a negative eigenvalue could only lead to minor technical complications. We use
the following representation from \cite{GruRybBLMS20}:%
\begin{equation}
R\left(  k\right)  =\phi_{1}\left(  k\right)  +\phi_{2}\left(  k\right)
+A\left(  k\right)  , \label{R-split}%
\end{equation}
where
\[
\phi_{1}\left(  k\right)  =\frac{T_{0}\left(  k\right)  }{2\mathrm{i}%
k}\widehat{q}\left(  k\right)  ,
\]%
\[
\phi_{2}\left(  k\right)  =\frac{T_{0}\left(  k\right)  }{\left(
2\mathrm{i}k\right)  ^{2}}\widehat{p}\left(  k\right)  ,
\]%
\[
\widehat{f}\left(  k\right)  :=\int_{0}^{\infty}e^{-2\mathrm{i}ks}p\left(
s\right)  \mathrm{d}s;
\]
$T_{0}\in H^{\infty}$ is the transmission coefficient for $q_{+}$; $p$ is the
derivative of an absolutely continuous function and%
\begin{equation}
\left\vert p\left(  x\right)  \right\vert \lesssim_{\left\Vert q_{+}%
\right\Vert _{L^{1}}}\left\vert q\left(  x\right)  \right\vert +C\int%
_{x}^{\infty}\left\vert q\right\vert ,\ \ x\geq0; \label{props of Q}%
\end{equation}
and $A\in H^{\infty}$ (which form is not important). Note that $\dfrac
{T_{0}\left(  k\right)  }{2\mathrm{i}k}$ remains bounded at $k=0$ as well as
$\dfrac{T_{0}\left(  k\right)  }{\left(  2\mathrm{i}k\right)  ^{2}}%
\widehat{p}\left(  k\right)  $.

It follows from (\ref{q0}) that%
\begin{align}
q_{0}\left(  x,t\right)   &  =\partial_{x}\left\{  \left(  PV\right)  \int
R\left(  k\right)  \xi_{x,t}\left(  k\right)  \frac{\mathrm{d}k}{\pi}+\int
R\left(  k\right)  \xi_{x,t}\left(  k\right)  y_{0}\left(  k,x,t\right)
\frac{\mathrm{d}k}{\pi}\right\} \nonumber\\
&  =\partial_{x}\left(  PV\right)  \int R\left(  k\right)  \xi_{x,t}\left(
k\right)  \frac{\mathrm{d}k}{\pi}\nonumber\\
&  +\int R\left(  k\right)  \left[  \partial_{x}\xi_{x,t}\left(  k\right)
\right]  y_{0}\left(  k,x,t\right)  \frac{\mathrm{d}k}{\pi}\nonumber\\
&  +\int R\left(  k\right)  \xi_{x,t}\left(  k\right)  \left[  \partial
_{x}y_{0}\left(  k,x,t\right)  \right]  \frac{\mathrm{d}k}{\pi}\nonumber\\
&  =:q_{1}\left(  x,t\right)  +q_{2}\left(  x,t\right)  +q_{3}\left(
x,t\right)  . \label{pieces}%
\end{align}
Consider each term separately. By (\ref{R-split}) for $q_{1}\left(
x,t\right)  $ we have%
\[
q_{1}\left(  x,t\right)  =q_{11}\left(  x,t\right)  +q_{12}\left(  x,t\right)
+q_{13}\left(  x,t\right)
\]
where%
\[
q_{1n}\left(  x,t\right)  :=\partial_{x}\left(  PV\right)  \int\xi
_{x,t}\left(  k\right)  \phi_{n}\left(  k\right)  \frac{\mathrm{d}k}{\pi
},\ \ \ n=1,2
\]
and%
\[
q_{13}\left(  x,t\right)  =\partial_{x}\left(  PV\right)  \int\xi_{x,t}\left(
k\right)  A\left(  k\right)  \frac{\mathrm{d}k}{\pi}.
\]
The simplest term is $q_{13}$. Since $\xi_{x,t}\left(  k+\mathrm{i}a\right)  $
(and all its $x$-derivatives) rapidly decays along $\mathbb{R+}\mathrm{i}a$
for any $a>0$ we deform the contour of integration to $\mathbb{R+}\mathrm{i}a$
that provides a rapid convergence of the integral (the original integral need
not be absolutely convergent).

The term $q_{12}$ is also easy. Indeed, since $\phi_{2}$ is clearly in $L^{1}$
we have%
\begin{align*}
q_{12}\left(  x,t\right)   &  =\partial_{x}\int\phi_{2}\left(  k\right)
\xi_{x,t}\left(  k\right)  \frac{\mathrm{d}k}{\pi}\\
&  =\int\frac{T_{0}\left(  k\right)  }{2\mathrm{i}k}\widehat{p}\left(
k\right)  \xi_{x,t}\left(  k\right)  \frac{\mathrm{d}k}{\pi}.
\end{align*}
It remains to show that this integral is absolutely convergent. It follows
from (\ref{props of Q}) that%
\begin{align*}
\left\Vert p\right\Vert  &  \lesssim_{\left\Vert q_{+}\right\Vert _{L^{1}}%
}\left\Vert q\right\Vert +C\left(  \int_{0}^{\infty}\left(  \int_{x}^{\infty
}\left\vert q\right\vert \right)  ^{2}\mathrm{d}x\right)  ^{1/2}\\
&  \leq\left\Vert q\right\Vert +C\left(  \int_{0}^{\infty}x\left\vert q\left(
x\right)  \right\vert \left(  \int_{x}^{\infty}\left\vert q\right\vert
\right)  \mathrm{d}x\right)  ^{1/2}\ \ \ \ \ \left(  q\in L_{1}^{1}\right) \\
&  \leq\left\Vert q\right\Vert +C\left\Vert q\right\Vert _{L^{1}}%
^{1/2}\left\Vert q\right\Vert _{L_{1}^{1}}^{1/2}<\infty
\end{align*}
and hence $\widehat{p}\in L^{2}$. Therefore, $q_{12}\left(  x,t\right)  $ is
locally bounded for $t\geq0$. (In fact, continuous).

Consider the remaining term $q_{11}\left(  x,t\right)  $. In order to proceed
we need first to regularize the improper integral. It cannot be done by merely
deforming $\mathbb{R}$ to $\mathbb{R+}\mathrm{i}a$ as is done for $q_{3}$
since $\widehat{q}\left(  k\right)  $ need not admit analytic continuation
into the upper half plane. To detour this circumstance we define
$\widehat{q}\left(  k\right)  $ by%
\[
\widehat{q}\left(  \overline{k}\right)  =\int_{0}^{\infty}e^{-2\mathrm{i}%
\overline{k}s}q\left(  s\right)  \mathrm{d}s,\ \ \ \operatorname{Im}k\geq0,
\]
and apply the Cauchy-Green formula for the strip $0\leq\operatorname{Im}%
k\leq1$. We have%
\begin{align*}
q_{11}\left(  x,t\right)   &  =\partial_{x}\left(  PV\right)  \int\xi
_{x,t}\left(  k\right)  \frac{T_{0}\left(  k\right)  }{2\mathrm{i}%
k}\widehat{q}\left(  \overline{k}\right)  \frac{\mathrm{d}k}{\pi}\\
&  =\partial_{x}\int_{0\leq\operatorname{Im}k\leq1}\xi_{x,t}\left(  k\right)
\frac{T_{0}\left(  k\right)  }{2\mathrm{i}k}\partial_{\overline{k}}%
\widehat{q}\left(  \overline{k}\right)  \frac{\mathrm{d}u\mathrm{d}v}{\pi^{2}%
}\text{ \ \ }(k=u+\mathrm{i}v)\\
&  +\partial_{x}\int_{\mathbb{R+}\mathrm{i}}\xi_{x,t}\left(  k\right)
\frac{T_{0}\left(  k\right)  }{2\mathrm{i}k}\widehat{q}\left(  \overline
{k}\right)  \frac{\mathrm{d}k}{\pi}\\
&  =:q_{111}\left(  x,t\right)  +q_{112}\left(  x,t\right)  .
\end{align*}
The second term $q_{112}\left(  x,t\right)  $ is treated the same way as
$q_{13}\left(  x,t\right)  $ and one immediately concludes that $q_{112}%
\left(  x,t\right)  $ are bounded (in fact, smooth) for $t>0$. Turn to
$q_{111}$. We rearrange it by observing that the double integral is absolutely
convergent and the order of integration may be interchanged:
\begin{align*}
q_{111}\left(  x,t\right)   &  =\partial_{x}\int_{0\leq\operatorname{Im}%
k\leq1}\xi_{x,t}\left(  k\right)  \frac{T_{0}\left(  k\right)  }{2\mathrm{i}%
k}\left[  \int_{0}^{\infty}\left(  -2\mathrm{i}s\right)  e^{-2\mathrm{i}%
\overline{k}s}q\left(  s\right)  \mathrm{d}s\right]  \frac{\mathrm{d}%
u\mathrm{d}v}{\pi^{2}}\\
&  =-\partial_{x}\int_{0}^{\infty}sq\left(  s\right)  \left[  \int%
_{0\leq\operatorname{Im}k\leq1}\xi_{x,t}\left(  k\right)  \frac{T_{0}\left(
k\right)  }{k}e^{-2\mathrm{i}\overline{k}s}\frac{\mathrm{d}u\mathrm{d}v}%
{\pi^{2}}\right]  \mathrm{d}s\\
&  =-\mathrm{i}\int_{0}^{\infty}\left[  \int_{0\leq\operatorname{Im}k\leq
1}e^{-2vs}\xi_{x-s,t}\left(  k\right)  T_{0}\left(  k\right)  \frac
{\mathrm{d}u\mathrm{d}v}{\pi^{2}}\right]  sq\left(  s\right)  \mathrm{d}%
s,\ \ \ (\overline{k}=k-2\mathrm{i}v)\\
&  \simeq\int_{0}^{\infty}\left\{  \int_{0}^{1}e^{-2vs}I\left(  s-x,t\right)
\mathrm{d}v\right\}  sq\left(  s\right)  \mathrm{d}s,
\end{align*}
where%
\begin{align*}
I\left(  s-x,t\right)   &  :=\int_{\mathbb{R}+\mathrm{i}v}\xi_{x-s,t}\left(
k\right)  T_{0}\left(  k\right)  \mathrm{d}k\\
&  =\int_{\mathbb{R}+\mathrm{i}}\xi_{x-s,t}\left(  k\right)  T_{0}\left(
k\right)  \mathrm{d}k
\end{align*}
is independent of $v\geq0$. Thus%
\begin{align*}
q_{111}\left(  x,t\right)   &  \simeq\partial_{x}\int_{0}^{\infty}\left\{
\int_{0}^{1}e^{-2vs}\mathrm{d}v\right\}  I\left(  s-x,t\right)  sq\left(
s\right)  \mathrm{d}s\\
&  =\int_{0}^{\infty}\frac{1-e^{-2s}}{2}I\left(  s-x,t\right)  q\left(
s\right)  \mathrm{d}s\\
&  =\int_{0}^{\infty}I\left(  s-x,t\right)  \frac{1-e^{-2s}}{2}q\left(
s\right)  \mathrm{d}s.
\end{align*}
It remains to study the behavior of $I\left(  s-x,t\right)  $ as
$s\rightarrow+\infty$. Since $T_{0}\left(  k\right)  =1+O\left(
k^{-1}\right)  $ as $k\rightarrow\infty$ and $x$ is fixed we only need to
worry about%
\[
I_{0}\left(  s,t\right)  =\int_{\mathbb{R}+\mathrm{i}}\xi_{-s,t}\left(
k\right)  \mathrm{d}k,
\]
which is closely related to the Airy function. For the reader convenience we
offer a direct treatment. Rewrite%
\[
\xi_{-s,t}\left(  k\right)  =\exp\mathrm{i}\left[  \Omega S\left(
\lambda\right)  \right]  ,
\]
where $S\left(  \lambda\right)  =\lambda^{3}/3-\lambda$ and%
\[
\omega=2\left(  s/3t\right)  ^{1/2},\Omega=3t\left(  s/3t\right)
^{3/2},\lambda=k/\omega
\]
Noticing that we need not adjust the contour of integration, we then have%
\begin{equation}
I_{0}\left(  s,t\right)  :=\omega\int_{\mathbb{R}+\mathrm{i}}e^{\mathrm{i}%
\Omega S\left(  \lambda\right)  }\mathrm{d}\lambda. \label{IF}%
\end{equation}
Apparently, the phase $S\left(  \lambda\right)  =$ $\lambda^{3}/3-\lambda$ has
stationary points at $\lambda=\pm1$ and we need to deform the contour in
(\ref{IF}) to pass through points $\lambda=\pm1$. We denote such a contour
$\Gamma$. To apply the steepest descent we need to make sure that
$\exp\mathrm{i}\left[  \omega S\left(  \lambda\right)  \right]  $ decay on
$\Gamma$ away from $\pm1$. To this end $\Gamma$ must be in the lower half
plane between points $-1$ and $1$. Noticing that $\Omega=\left(  3t/8\right)
\omega^{3}$, $\omega=O\left(  s^{1/2}\right)  $ by the steepest descent method
(see e.g. \cite{Wong2001}) one has%
\begin{align*}
I_{0}\left(  s,t\right)   &  =\omega\int_{\Gamma}e^{\mathrm{i}\Omega S\left(
\lambda\right)  }\mathrm{d}\lambda=\omega O\left(  \Omega^{-1/2}\right)
,\ \ \ \Omega\rightarrow+\infty,\\
&  =O\left(  s^{-1/4}\right)  ,\ \ \ s\rightarrow+\infty.
\end{align*}
Thus $q_{111}\left(  x,t\right)  $ is bounded for $t>0$ (even if $q\left(
x\right)  $ decays slower than $L^{1}$). All four pieces $q_{1}\left(
x,t\right)  $ is made of are bounded and so is $q_{1}\left(  x,t\right)  $.

There is now only one term $q_{3}$ left in (\ref{pieces}) to analyze. We are
done if we show that $\partial_{x}y_{0}\in H^{2}$. Differentiating%
\[
y_{0}+\mathbb{H}y_{0}=-\mathbb{H}1,\ \ \ \mathbb{H}:=\mathbb{H}\left(
R\xi_{x,t}\right)  ,
\]
in $x$ one has%
\[
\partial_{x}y_{0}+\mathbb{H}\left(  \partial_{x}y_{0}\right)  =-\partial
_{x}\mathbb{H}1-\left(  \partial_{x}\mathbb{H}\right)  y_{0}.
\]
Thus%
\[
\partial_{x}y_{0}=-\left(  I+\mathbb{H}\right)  ^{-1}\left[  \left(
\partial_{x}\mathbb{H}\right)  1+\left(  \partial_{x}\mathbb{H}\right)
y_{0}\right]  .
\]
It follows that we only need to show that $\left(  \partial_{x}\mathbb{H}%
\right)  1\in H^{2}$ and $\partial_{x}\mathbb{H}$ is a bounded operator. Note
first that%
\[
\left(  \partial_{x}\mathbb{H}\right)  =\mathbb{H}\left(  2\mathrm{i}%
kR\xi_{x,t}\right)  .
\]
Since $kR\left(  k\right)  \in L^{2}$ (from the second Zakharov-Faddeev trace
formula),%
\[
\left(  \partial_{x}\mathbb{H}\right)  1=\mathbb{JP}_{-}\left(  2\mathrm{i}%
kR\left(  k\right)  \xi_{x,t}\left(  k\right)  \right)  \in H^{2}%
\]
as desired. The proof of boundedness of $\left(  \partial_{x}\mathbb{H}%
\right)  $ is a bit more complicated. By (\ref{R-split}) we have%
\[
\mathbb{H=H}_{1}+\mathbb{H}_{2}+\mathbb{H}_{3}%
\]
where%
\[
\mathbb{H}_{n}:=\mathbb{H}\left(  \phi_{n}\xi_{x,t}\right)
,n=1,2,\ \ \ \mathbb{H}_{3}:=\mathbb{H}\left(  A\xi_{x,t}\right)  .
\]
For $n=1,2$ both $\mathbb{H}_{n}$ admit a direct differentiation in $x$.
Indeed, one can easily see that%
\[
\partial_{x}\mathbb{H}_{n}=\mathbb{H}\left(  \phi_{n}\partial_{x}\xi
_{x,t}\right)  =\mathbb{H}\left(  2\mathrm{i}k\phi_{n}\xi_{x,t}\right)
,n=1,2.
\]
Since $q,p\in L^{1}$%
\begin{equation}
2\mathrm{i}k\phi_{1}\left(  k\right)  =T_{0}\left(  k\right)  \widehat{q}%
\left(  k\right)  \in L^{\infty} \label{R-split 1}%
\end{equation}
and%
\[
2\mathrm{i}k\phi_{2}\left(  k\right)  =\frac{T_{0}\left(  k\right)
}{2\mathrm{i}k}\widehat{p}\left(  k\right)  \in L^{\infty}%
\]
and hence the operators $\partial_{x}\mathbb{H}\left(  \phi_{n}\right)
,n=1,2,$ are bounded. To differentiate $\mathbb{H}_{3}$ we need first to use
(\ref{regularized Hankel}). One has%
\[
\mathbb{H=H}\left(  \mathbb{\bar{P}}_{-}\left(  R\xi_{x,t}\right)  \right)  .
\]
But
\begin{align*}
\mathbb{P}_{-}\left[  A\left(  k\right)  \xi_{x,t}\left(  k\right)  \right]
&  =-\frac{1}{2\mathrm{i}\pi}\int\frac{A\left(  \lambda\right)  \xi
_{x,t}\left(  \lambda\right)  }{\lambda-\left(  k-\mathrm{i}0\right)
}\mathrm{d}\lambda\\
&  =-\frac{1}{2\mathrm{i}\pi}\int_{\mathbb{R}+\mathrm{i}}\frac{A\left(
\lambda\right)  \xi_{x,t}\left(  \lambda\right)  }{\lambda-k}\mathrm{d}%
\lambda,
\end{align*}
where the integral is absolutely convergent, and therefore we may
differentiate under the integral sign%
\[
\partial_{x}\mathbb{P}_{-}\left[  A\left(  k\right)  \xi_{x,t}\left(
k\right)  \right]  =-\frac{1}{2\mathrm{i}\pi}\int_{\mathbb{R}+\mathrm{i}}%
\frac{2\mathrm{i}\lambda A\left(  \lambda\right)  \xi_{x,t}\left(
\lambda\right)  }{\lambda-k}\mathrm{d}\lambda,
\]
which is well-defined and bounded. Consequently, $\partial_{x}\mathbb{H}_{3}$
is a bounded operator and so is $\partial_{x}\mathbb{H}$. Thus, indeed
$\partial_{x}y_{0}\in H^{2}$.
\end{proof}

\begin{remark}
\label{remark in last section}Theorem 7.1 of \cite{GruRyb2018}, which proof is
based on the Dyson formula, relates smoothness of $q\left(  x,t\right)  $ with
the decay of $q\left(  x\right)  $. In particular, it follows from that result
that if $q\left(  x\right)  \in L_{3/2}^{1}\cap L^{2}$ then $q\left(
x,t\right)  \in L_{\operatorname*{loc}}^{\infty}\cap L^{2}$ for $t>0.$
Stronger decay is due to the fact the Dyson formula involves $\det\left(
I+\mathbb{H}\right)  $, which use requires to analyze differentiability of
$\mathbb{H}$ in trace norm. (The latter is also technically much more
involved. It was our attempt to dispose of trace norm considerations that led
us to our trace formulas, which require uniform norms only.
\end{remark}

The following important consequence directly follows from Theorem
\ref{Thm on decay} and invariance of the KdV\ with respect to $\left(
x,t\right)  \rightarrow\left(  -x,-t\right)  $.

\begin{corollary}
\label{Corol on preservation}The class $L_{1}^{1}$ is not preserved under the
KdV flow.
\end{corollary}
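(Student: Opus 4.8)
The plan is a proof by contradiction built on the observation that Theorem \ref{Thm on decay} is a strictly forward-in-time smoothing statement, while the symmetry $(x,t)\mapsto(-x,-t)$ of (\ref{KdV}) converts the backward flow into a forward flow of reflected data. First I would verify the symmetry directly: if $q(x,t)$ solves (\ref{KdV}) then so does $q(-x,-t)$, since under this change of variables each of the three terms $\partial_t q$, $-6q\partial_x q$, $\partial_x^3 q$ changes sign exactly once, so the equation is multiplied overall by $-1$ and still vanishes. Combining this with time translation, I obtain the reflection identity I need: the solution of (\ref{KdV}) with initial datum $w_0(x):=q(-x,t_0)$ is $W(x,t)=q(-x,t_0-t)$, so that in particular $W(\cdot,t_0)=q_0(-\cdot)$ returns the reflected original profile.

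Next I would assume, toward a contradiction, that $L_1^1$ is preserved by the flow, fix an arbitrary $q_0\in L_1^1\cap L^2$ and a time $t_0>0$, and track where the datum $w_0$ sits. By the assumed preservation $q(\cdot,t_0)\in L_1^1$, and since the $L^2$ norm is conserved along the KdV flow it is also in $L^2$; as both classes are invariant under $x\mapsto -x$, the reflected datum $w_0=q(-\cdot,t_0)$ again lies in $L_1^1\cap L^2$. I would then apply Theorem \ref{Thm on decay} to $w_0$: its forward evolution at time $t_0>0$ must belong to $L_{\operatorname{loc}}^{\infty}\cap L^2$. But by the reflection identity that evolution is precisely $W(\cdot,t_0)=q_0(-\cdot)$, forcing $q_0(-\cdot)$, and hence $q_0$, to be locally bounded.

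Since $q_0\in L_1^1\cap L^2$ was arbitrary, this would make every such profile locally bounded, which is false: for instance $q_0(x)=|x|^{-1/4}\chi_{[-1,1]}(x)$ is summable with its first moment and square integrable yet unbounded at the origin. This contradiction proves that $L_1^1$ is not preserved (indeed not even forward in time, since only forward preservation and forward smoothing were used). I expect the only delicate point to be the bookkeeping that keeps the relevant datum inside $L_1^1\cap L^2$ so that Theorem \ref{Thm on decay} genuinely applies: this rests on the global $L^2$ well-posedness and the $L^2$-conservation of the KdV flow together with the invariance of $L_1^1\cap L^2$ under reflection. The conceptual heart, however, is simply that the forward smoothing of Theorem \ref{Thm on decay}, read through the $(x,t)\mapsto(-x,-t)$ symmetry, cannot coexist with decay being preserved, since otherwise one could manufacture local boundedness out of a singular profile.
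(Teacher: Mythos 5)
Your proof is correct and follows essentially the same route as the paper's: assume preservation for contradiction, apply the forward smoothing of Theorem \ref{Thm on decay}, and use the $(x,t)\mapsto(-x,-t)$ symmetry to turn backward evolution into forward evolution of reflected data. In fact your writeup is slightly more careful than the paper's, which restarts from $q(x,t_0)$ rather than from the reflected profile $q(-x,t_0)$ and leaves the reflection identity $W(\cdot,t_0)=q_0(-\cdot)$ implicit.
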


\begin{proof}
Suppose to the contrary that $L_{1}^{1}$ is preserved under the KdV flow. I.e.
if $q\left(  x\right)  \in L_{1}^{1}$ then $q\left(  x,t\right)  \in L_{1}%
^{1}$ for any $t$. Take $q\left(  x\right)  \in L_{1}^{1}\cap L^{2}$ but
$q\left(  x\right)  \notin L^{\infty}$ and fix $t_{0}>0$ \ By Theorem
\ref{Thm on decay}, $q_{0}\left(  x\right)  :=q\left(  x,t_{0}\right)  \in
L_{\operatorname*{loc}}^{\infty}\cap L^{2}$. Take $q_{0}\left(  x\right)  $ as
new initial data. By our assumption it is also in $L_{1}^{1}$. Thus
$q_{0}\left(  x\right)  \in L_{\operatorname*{loc}}^{\infty}\cap L_{1}^{1}\cap
L^{2}$. But this leads us to a contradiction as $q_{0}\left(  x,t_{0}\right)
=q\left(  -x\right)  $ was not assumed locally bounded.
\end{proof}

In the conclusion we mention that much more general and precise statements can
be made regarding how the KdV solutions gain regularity (smoothness) in
exchange for loss of decay. We plan on showing elsewhere how the results of
\cite{GruRyb2018}, \cite{GruRybBLMS20}, \cite{RybCommPDEs2013}, and
\cite{RybOTAA23} may be improved to optimal statements.

\section{Appendix\label{appendix}}

We demonstrate that the Deift-Trubowitz trace formula is actually a
"nonlinearization" of our trace formulas. Assume for simplicity that there are
no bound states (non-empty negative spectrum merely complicates the
computations) and do our computation for the $+$ sign only. The reader who has
been able to get to this point should be able to follow the calculations
below. Denoting $\mathbb{H=H}\left(  R\xi_{x,t}\right)  $, $h:=\mathbb{H}1$,
$1_{a}:=\chi_{\left\vert \cdot\right\vert \leq a}$, we have%

\begin{align*}
\pi q  &  =-2\partial_{x}\int\operatorname{Re}y=\frac{1}{\pi}\partial_{x}%
\int\operatorname{Re}\left(  I+\mathbb{H}\right)  ^{-1}h\\
&  =2\partial_{x}\int\operatorname{Re}\left(  I+\mathbb{H}\right)  ^{-1}h\\
&  =-2\int\left(  I+\mathbb{H}\right)  ^{-1}\left(  \partial_{x}%
\mathbb{H}\right)  \mathbb{\left(  I+\mathbb{H}\right)  }^{-1}h+2\int\left(
I+\mathbb{H}\right)  ^{-1}\partial_{x}h\\
&  =:q_{1}+q_{2}.
\end{align*}
For $q_{1}$ we have%
\[
q_{1}=-2\lim_{a\rightarrow\infty}\left\langle \left(  I+\mathbb{H}\right)
^{-1}\left(  \partial_{x}\mathbb{H}\right)  \mathbb{\left(  I+\mathbb{H}%
\right)  }^{-1}h,1_{a}\right\rangle .
\]
For the inner product one has%
\begin{align*}
&  \left\langle \left(  I+\mathbb{H}\right)  ^{-1}\left(  \partial
_{x}\mathbb{H}\right)  \mathbb{\left(  I+\mathbb{H}\right)  }^{-1}%
h,\mathbb{P}_{+}1_{a}\right\rangle \\
&  =\left\langle \left(  \partial_{x}\mathbb{H}\right)  \mathbb{\left(
I+\mathbb{H}\right)  }^{-1}h,\left(  I+\mathbb{H}\right)  ^{-1}\mathbb{P}%
_{+}1_{a}\right\rangle \\
&  =-\left\langle \left(  \partial_{x}\mathbb{H}\right)  y,\left(
I+\mathbb{H}\right)  ^{-1}\mathbb{P}_{+}1_{a}\right\rangle \\
&  =-\left\langle \left(  \partial_{x}\mathbb{H}\right)  y,\mathbb{P}%
_{+}\left[  1_{a}-\left(  I+\mathbb{H}\right)  ^{-1}\mathbb{HP}_{+}%
1_{a}\right]  \right\rangle \\
&  =-\left\langle \left(  \partial_{x}\mathbb{H}\right)  y,\mathbb{P}_{+}%
1_{a}\right\rangle +\left\langle \left(  \partial_{x}\mathbb{H}\right)
,\left(  I+\mathbb{H}\right)  ^{-1}\mathbb{HP}_{+}1_{a}\right\rangle .
\end{align*}
Passing to the limit yields%
\[
q_{1}=2\int\left(  \partial_{x}\mathbb{H}\right)  y+\left\langle \left(
\partial_{x}\mathbb{H}\right)  y,y\right\rangle .
\]
One may now see how "nonlinear" dependence on $y$ in (\ref{DT trace}) comes
about. Indeed, the second term $\left\langle \left(  \partial_{x}%
\mathbb{H}\right)  y,y\right\rangle $ is a quadratic form. For $q_{2}$ we
similarly have%
\begin{align*}
q_{2}  &  =2\int\left(  I+\mathbb{H}\right)  ^{-1}\partial_{x}h\\
&  =2\lim_{a\rightarrow\infty}\left\langle \left(  I+\mathbb{H}\right)
^{-1}\left(  \partial_{x}h\right)  ,\mathbb{P}_{+}1_{a}\right\rangle \\
&  =2\lim_{a\rightarrow\infty}\left\langle \partial_{x}h,\mathbb{P}_{+}%
1_{a}-\left(  I+\mathbb{H}\right)  ^{-1}\mathbb{HP}_{+}1_{a}\right\rangle \\
&  =\int\partial_{x}h+\left\langle \partial_{x}h,y\right\rangle .
\end{align*}
Since%
\[
\partial_{x}\mathbb{H}f\left(  k\right)  \mathbb{=}2\mathrm{i}\mathbb{JP}%
_{-}\left[  kR\left(  k\right)  e^{2\mathrm{i}kx}f\left(  k\right)  \right]
\]
we have%
\begin{align*}
\pi q  &  =q_{1}+q_{2}\\
&  =\left\langle \left(  \partial_{x}\mathbb{H}\right)  y,y\right\rangle
+2\int\left(  \partial_{x}\mathbb{H}\right)  y+\left\langle \partial
_{x}h,y\right\rangle +2\int\partial_{x}h\\
&  =2\mathrm{i}\int kR\left(  k\right)  e^{2\mathrm{i}kx}y\left(  k,x\right)
^{2}\mathrm{d}k\\
&  +4\mathrm{i}\int kR\left(  k\right)  e^{2\mathrm{i}kx}y\left(  k,x\right)
\mathrm{d}k+2\mathrm{i}\int kR\left(  k\right)  e^{2\mathrm{i}kx}\mathrm{d}k\\
&  =2\mathrm{i}\int kR\left(  k\right)  e^{2\mathrm{i}kx}\left[  1+y\left(
k,x\right)  \right]  ^{2}\mathrm{d}k\\
&  =2\mathrm{i}\int kR\left(  k\right)  \psi\left(  x,k\right)  ^{2}%
\mathrm{d}k
\end{align*}
and (\ref{DT trace}) with $c_{n}=0$ follows.

\end{document}